\documentclass[pra,twocolumn,a4paper,superscriptaddress]{revtex4}
\usepackage{graphicx,amsmath,bm,natbib, dsfont, longtable, amssymb}
\usepackage[bbgreekl]{mathbbol}
\usepackage{bm}
\usepackage{amsmath}
\usepackage{amsthm}
\usepackage{graphicx}
\usepackage{amsfonts}
\usepackage{amssymb}
\usepackage{float}
\usepackage{xcolor}
\usepackage{url}

\newtheorem{TH1}{Theorem}
\newtheorem{COR}{Corollary}[TH1]

\def\ket#1{\mathinner{|{#1}\rangle}}

\newcommand{\ketbra}[2]{|#1\rangle\!\!\!\;\langle#2|}

\def\text#1{\textrm{#1}}

\newcommand{\C}{\mathcal{C}}

\begin{document}

\title{Bell correlations in a many-body system with finite statistics}
\author{Sebastian Wagner}
\affiliation{Quantum Optics Theory Group, Department of Physics, University of Basel, Klingelbergstrasse 82, 4056 Basel, Switzerland}
\author{Roman Schmied}
\affiliation{Quantum Atom Optics Lab, Department of Physics, University of Basel, Klingelbergstrasse 82, 4056 Basel, Switzerland}
\author{Matteo Fadel}
\affiliation{Quantum Atom Optics Lab, Department of Physics, University of Basel, Klingelbergstrasse 82, 4056 Basel, Switzerland}
\author{Philipp Treutlein}
\affiliation{Quantum Atom Optics Lab, Department of Physics, University of Basel, Klingelbergstrasse 82, 4056 Basel, Switzerland}
\author{Nicolas Sangouard}
\affiliation{Quantum Optics Theory Group, Department of Physics, University of Basel, Klingelbergstrasse 82, 4056 Basel, Switzerland}
\author{Jean-Daniel Bancal}
\affiliation{Quantum Optics Theory Group, Department of Physics, University of Basel, Klingelbergstrasse 82, 4056 Basel, Switzerland}

\date{\today}

\begin{abstract}
\noindent A recent experiment reported the first violation of a Bell correlation witness in a many-body system [Science 352, 441 (2016)]. Following discussions in this paper, we address here the question of the statistics required to witness Bell correlated states, i.e. states violating a Bell inequality, in such experiments.
We start by deriving multipartite Bell inequalities involving an arbitrary number of measurement settings, two outcomes per party and one- and two-body correlators only.
Based on these inequalities, we then build up improved witnesses able to detect Bell-correlated states in many-body systems using two collective measurements only. These witnesses can potentially detect Bell correlations in states with an arbitrarily low amount of spin squeezing. We then establish an upper bound on the statistics needed to convincingly conclude that a measured state is Bell-correlated.
\end{abstract}
\maketitle

\paragraph{Introduction --}

Physics research fundamentally relies on the proper analysis of finite experimental data. In this exercise, assumptions play a subtle but crucial role. On the one hand, they are needed in order to reach a conclusion; even device-independent assessments rely on assumptions~\cite{Scarani12}.
On the other hand, they open the door for undesirable effects ranging from a reduction of the conclusion's scope, when more assumptions are used than strictly needed, to biased results when relying on unmet assumptions. Contrary to popular belief, such cases are frequent in science, even for common assumptions~\cite{Bailey16,Youden72}. Relying on fewer hypotheses, when possible, is thus desirable to obtain more general, accurate and trustworthy conclusions~\cite{diew1,diew2}.

Bell nonlocality, as revealed by the violation of a Bell inequality, constitutes one of the strongest forms of non-classicality known today. However, its demonstration has long been restricted to systems involving few particles~\cite{Lanyon14,Eibl02,Zhao03,Lavoie09,BellExp}. Recently, the discovery of multipartite Bell inequalities that only rely on one- and two-body correlators opened up new possibilities~\cite{Tura14}. Although these inequalities have not yet lead to the realization of a multipartite Bell test, they can be used to derive witnesses able to detect Bell correlated states, i.e. states capable of violating a Bell inequality.

Using such a witness, an experiment recently detected the presence of Bell correlations in a many-body system under the assumption of gaussian statistics~\cite{Schmied16}. While this demonstration uses spin squeezed states, the detection of Bell correlations in other systems was also recently investigated~\cite{Pelisson16}.
The witness used in Ref.~\cite{Schmied16} involves one- and two-body correlation functions and takes the form $\mathcal{W} \geq 0$, where the inequality is satisfied by measurements on states that are not Bell-correlated. Observation of a negative value for $\mathcal{W}$ then leads to the conclusion that the measured system is Bell-correlated. However, due to the statistics loophole~\cite{Gill12,Zhang11}, reaching such a conclusion in the presence of finite statistics requires special care.
In particular, an assessment of the probability with which a non-Bell-correlated state could be responsible for the observed data is required before concluding about the presence of Bell correlations without further assumptions.

Concretely, the witness of Refs.~\cite{Schmied16} has the property of admitting a quantum violation lower-bounded by a constant $\mathcal{W}_\text{opt} < 0$, while the largest possible value $\mathcal{W}_\text{max} > 0$ is achievable by a product state and increases linearly with the size of the system $N$. These properties imply that a small number of measurements on a state of the form
\begin{equation}\label{eq:counterExample}
\rho=(1-q)\ketbra{\psi}{\psi} + q(\ketbra{{\uparrow}}{{\uparrow}})^{\otimes N},
\end{equation}
where $\mathcal{W}(\ket{\psi})=\mathcal{W}_\text{opt}$, $\mathcal{W}(\ket{\uparrow}^{\otimes N})=\mathcal{W}_\text{max}$ and $q$ is small, is likely to produce a negative estimate of $\mathcal{W}$, even though the state is not detected by the witness in the limit of infinitely many measurement rounds~\cite{Schmied16}. This state thus imposes a lower bound on the number of measurements required to exclude, through such witnesses, all non-Bell-correlated states with high confidence. Contrary to other assessments, this lower bound increases with the number of particles involved in the many-body system. Therefore, it is not captured by the standard deviation of one- and two-body correlation functions (which on the contrary decreases as the number of particles increases).

It is worth noting that states of the form~\eqref{eq:counterExample} put similar bounds on the number of measurements required to perform any hypothesis tests in a many-body system satisfying the conditions above. This includes in particular tests of entanglement~\cite{Gross10,Riedel10,Bernd14,Pezze} based on the entanglement witnesses of Ref.~\cite{Sorensen01a,Toth09,Hyllus12}.

In this article, we address this statistical problem in the case of Bell correlation detection by providing a number of measurement rounds sufficient to exclude non-Bell-correlated states from an observed witness violation. Let us mention that in Ref.~\cite{Schmied16}, the statistics loophole is circumvented by the addition of an assumption on the set of local states being tested. This has the effect of reducing the scope of the conclusion: the data reported in Ref.~\cite{Schmied16}, are only able to exclude a subset of all non-Bell-correlated states (as pointed out in the reference).
Here, we show that such additional assumptions are not required in experiments on many-body systems, and thus argue that they should be avoided in the future.

In order to minimize the amount of statistics required to reach our conclusion, we start by investingating improved Bell correlation witnesses. For this, we first derive Bell inequalities with two-body correlators and an arbitrary number of settings. This allows us to obtain Bell-correlation witnesses that are more resistant to noise compared to the one known to date~\cite{Schmied16}. We then analyse the statistical properties of these witnesses and provide an upper bound on the number of measurement rounds needed to rule out all local states in a many-body system. We show that this upper bound is linear in the number of particles, hence making the detection of Bell correlations free of the statistical loophole possible in systems with a large number of particles.

\paragraph{Symmetric two-body correlator Bell inequalities with an arbitrary number of settings --}
Multipartite Bell inequalities that are symmetric under exchange of parties and which involve only one- and two-body correlators have been proposed in scenarios where each party uses two measurement settings and receives an outcome among two possible results \cite{Tura14}. Similar inequalities were also obtained for translationally invariant systems~\cite{Tura14a}, or based on Hamiltonians~\cite{Tura16}. Here, we derive a similar family of Bell inequalities that is invariant under arbitrary permutations of parties but allows for an arbitrary number of measurement settings per party.

Let us consider a scenario in which $N$ parties can each perform one of $m$ possible measurements $M_k^{(i)}$ ($k=0,...,m-1$; $i=1,...,N$) with binary outcomes $\pm 1$. We write the following inequality:
\begin{align}
I_{N,m} = \sum_{k=0}^{m-1} \alpha_k S_k + \frac12 \sum_{k,l} S_{kl} \geq -\beta_c \, ,
\label{eq:symBell}
\end{align}
where $\alpha_k=m-2k-1$, $\beta_c$ is the local bound, and the symmetrized correlators are defined as 
\begin{align}
S_k := \sum_{i=1}^N \langle M_k^{(i)} \rangle \, , \quad S_{kl}:= \sum_{i \neq j} \langle M_k^{(i)} M_l^{(j)}\rangle \, . 
\label{eq:correlators}
\end{align}
Let us show that~\eqref{eq:symBell} is a valid Bell inequality for $\beta_c=\left\lfloor \frac{m^2 N}{2}\right\rfloor$, where $\lfloor x\rfloor$ is the largest integer smaller or equal to $x$. Below, we assume that $m$ is even; see Appendix A for the case of odd $m$.

Since $I_{N,m}$ is linear in the probabilities and local behaviors can be decomposed as a convex combination of deterministic local strategies, the local bound of Eq.~\eqref{eq:symBell} can be reached by a deterministic local strategy~\cite{Brunner14}. We thus restrict our attention to these strategies and write
\begin{align}
\langle M_k^{(i)} \rangle = x_k^i = \pm 1 \quad \Rightarrow S_{kl}=S_k S_l -\sum_{i=1}^N x_k^i x_l^i \, ,
\label{eq:determinism}
\end{align}
where $x_k^i$ is the (deterministic) outcome party $i$ produces when asked question $k$. This directly leads to the following decomposition:
\begin{align}
I_{N,m} = \sum_{k=0}^{\frac{m}{2}-1}\alpha_k (S_k-S_{m-k-1}) +\frac12 B^2 - \frac12 C \geq -\beta_c\, ,
\label{eq:BellEven}
\end{align}
with $B:= \sum_{k=0}^{m-1}S_k$ and $C:=\sum_{i=1}^N\left(\sum_{k=0}^{m-1}x_k^i\right)^2$. Due to the symmetry under exchange of parties of this Bell expression, it is convenient to introduce, following~\cite{Tura14}, variables counting the number of parties that use a specific deterministic strategy:
\begin{align}
a_{j_1<...<j_n} :&=\#\{i \in \{1,...,N\} \vert x_k^i = -1 \text{ iff } k \in \{j_1,...,j_n\}\} \nonumber \\
\bar{a}_{j_1<...<j_n} :&=\#\{i \in \{1,...,N\} \vert x_k^i = +1 \text{ iff } k \in \{j_1,...,j_n\}\}\nonumber\\
n &\leq \frac{m}{2} \, , \quad \bar{a}_{j_1,...,j_{\frac{m}{2}}} \equiv 0 \, ,
\label{eq:strategyvar}
\end{align} 
where $\#$ denotes the set cardinality. Since each party has to choose a strategy, the variables sum up to $N$:
\begin{align}
\sum_{\text{all variables}}= \sum_{n=0}^{\frac{m}{2}}\sum_{j_1<...<j_n}\left(a_{j_1...j_n}+\bar{a}_{j_1...j_n}\right)=N \, .
\label{eq:SumN}
\end{align}
The correlators can now be expressed as
\begin{align}
S_k = \sum_{n=0}^{\frac{m}{2}}\sum_{j_1<...<j_n}\left(a_{j_1...j_n}-\bar{a}_{j_1...j_n}\right)y_k^{j_1...j_n} \, ,
\label{eq:Sky}
\end{align}
with $y_k^{j_1...j_n}= -1$ if $k\in\{j_1,...,j_n\}$, and $+1$ otherwise.

The first term of~\eqref{eq:BellEven} concerns the difference between two correlators. Let us see how this term decomposes as a function of the number of indices present in its variables. From Eq.~\eqref{eq:Sky}, it is clear that a variable with $n$ indices only appears in the difference $S_k-S_l$ if $y_k^{j_1...j_n}\neq y_l^{j_1...j_n}$. But the corresponding strategy only has $n$ differing outcomes and each correlator in this term only appears once, so a variable with $n$ indices appears in at most $n$ of these differences. Moreover, if it appears, it does so with a factor $\pm 2$. The coefficient in front of a variable with $n$ indices in the first sum of~\eqref{eq:BellEven} thus cannot be smaller than $-2\sum_{k=0}^{n-1}\alpha_k=2n(n-m)$.

The second term of~\eqref{eq:BellEven} can be bounded as $B^2\geq 0$, while the third one can be expressed as
\begin{align}
C=\sum_{n=0}^{\frac{m}{2}}\sum_{j_1<...<j_n}\left(a_{j_1...j_n}+\bar{a}_{j_1...j_n}\right) (m-2n)^2 \, .
\end{align} 
Putting everything together and using property \eqref{eq:SumN}, we arrive at 
\begin{align}
I_{N,m}&\geq \sum_{k=0}^{\frac{m}{2}-1}\alpha_k (S_k-S_{m-k-1}) - \frac12 C \nonumber\\
 &\geq -\frac{m^2}{2}\sum_{\text{all variables}}=-\frac{m^2 N}{2} = -\beta_c \, ,
\end{align}
which concludes the proof.

Note that this bound is achieved for $a_{01...\frac{m}{2}-1}=N$, i.e.\ when for each party exactly the first half of the measurements yields result $-1$. Note also that the Bell inequality~\eqref{eq:symBell} does not reduce to Ineq.~(6) of Ref.~\cite{Tura14} when $m=2$. 
Indeed, while none of these inequalities is a facet of the local polytope, the latter one is a facet of the symmetrized 2-body correlator local polytope~\cite{Tura14,symm10}.

\paragraph{From Bell inequalities to Bell-correlation witnesses --}
Let us now derive a set of Bell-correlation witnesses
assuming a certain form for the measurement operators. Here, no assumptions are made on the measured state.

Following Ref.~\cite{Schmied16}, we start from inequality \eqref{eq:symBell} and introduce spin measurements along the axes $\vec{d}_k$, $k=0,...,m-1$, as well as the collective spin observables $\hat{S}_k$: 
\begin{align}
M_k^{(i)}= \vec{d}_k \cdot \vec{\sigma}^{(i)} \, , \quad \hat{S}_k = \frac12 \sum_{i=1}^N M_k^{(i)} \, ,
\label{eq:SpinMeas}
\end{align}
where $\vec \sigma$ is the Pauli vector acting on a spin-$\frac12$ system. The correlators can be expressed in terms of these total spin observables and the measurement directions~\cite{Tura14}:
\begin{align}
S_k &= 2\langle \hat{S}_k \rangle \nonumber\\
S_{kl} &= 2\left[\left\langle\hat{S}_k\hat{S}_l\right\rangle+\left\langle\hat{S}_l\hat{S}_k\right\rangle\right] - N \vec{d}_k\cdot\vec{d}_l\, .
\end{align}
This defines the Bell operators
\begin{align}
\hat{W}_{N,m} := 2\!\sum_{k=0}^{m-1}\! \alpha_k \hat{S}_k + 2\!\sum_{k,l} \hat{S}_k\hat{S}_l -\frac{N}{2}\! \sum_{k,l} \vec{d}_k\!\cdot\!\vec{d}_l + \left\lfloor\!\frac{m^2N}{2}\!\right\rfloor \, , 
\end{align}
whose expectation values are positive for states that are not Bell-correlated. Note that the expectation value of these operators need not be negative for all Bell-correlated states and every choice of mesurement, though. A negative value may only be achieved for specific choices of states and measurement settings.

We now consider measurement directions $\vec{d}_k = \vec{a}\cos(\vartheta_k)+\vec{b}\sin(\vartheta_k)$ lying in a plane spanned by two orthonormal vectors $\vec{a}$ and $\vec{b}$, 
with the antisymmetric angle distribution $\vartheta_{m-k-1}=-\vartheta_k$. Note that the coefficients $\alpha_k$ share the same antisymmetry. Defining $\mathcal{W}_m := \left\langle \frac{\hat{W}_{N,m}}{2 \hat{N}} \right\rangle $ for even $m$, we arrive at the following family of witnesses:
\begin{align}
\mathcal{W}_{m} = \C_b {\sum_{k=0}^{\frac{m}{2}-1} \alpha_k \sin(\vartheta_k)} \,{-} {\left(1-\zeta_a^2\right)} {\left[\sum_{k=0}^{\frac{m}{2}-1}\cos(\vartheta_k)\right]^2}\! {+}\, \frac{m^2}{4} \, , 
\label{eq:witness}
\end{align}
with $\mathcal{W}_m\geq 0$ for states that are not Bell correlated. These Bell correlation witnesses depend on $\frac{m}{2}$ angles $\vartheta_k$ and involve just two quantities to be measured: the scaled collective spin $\C_b:=\left\langle \frac{\hat{S}_{\vec{b}}}{\hat{N}/2}\right\rangle$ and the scaled second moment $\zeta_a^2 :=\left\langle \frac{\hat{S}_{\vec{a}}^2}{\hat{N}/4}\right\rangle$.

The tightest constraints on $\C_b$ and $\zeta_a^2$ that allow for a violation of $\mathcal{W}_m\geq0$ are obtained by minimizing $\mathcal{W}_m$ over the angles $\vartheta_k$. Solving $\frac{\partial \mathcal{W}_m}{\partial \vartheta_k}=0$ yields (see Appendix B):
\begin{align}
\vartheta_k = -\arctan[\lambda_m (m-2k-1)] \, ,  
\label{eq:OptAng} \\
\frac{\C_b}{2\lambda_m(1-\zeta_a^2)}=\sum_{k=0}^{\frac{m}{2}-1}\cos(\vartheta_k)\, . \label{eq:Selfcons} 
\end{align} 
Equation \eqref{eq:Selfcons} is a self-consistency equation for $\lambda_m$ that has to be satisfied in order to minimize $\mathcal{W}_m$.

Using these parameters, we can rewrite our witness in terms of the physical parameters $\C_b$ and $\zeta_a^2$ only. For two measurement directions ($m=2$), we find that states which are not Bell-correlated satisfy
\begin{align}
\zeta_a^2\geq Z_2(\C_b)= \frac12\left(1-\sqrt{1-\C_b^2}\right) \, .
\label{eq:Z2}
\end{align}
This recovers the bound obtained from a different inequality in~\cite{Schmied16}. Note that in the present case, the argument is more direct since it does not involve $\C_a$, the first moment of the spin operator in the $a$ direction.

\begin{figure}
\includegraphics[width=0.9\linewidth]{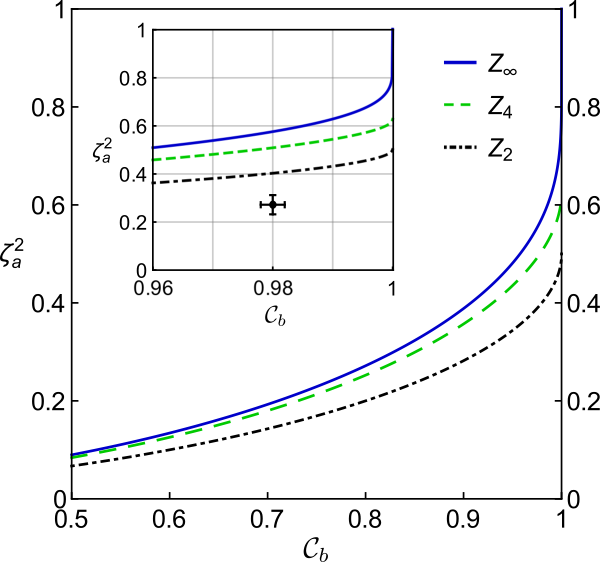}
\caption{Plots of the critical lines $Z_2$, $Z_4$ and $Z_{\infty}$. The witness obtained from the Bell inequality with $4$ settings already provides a significant improvement over the case of $2$ settings. The black point in the inset shows the data point from~\cite{Schmied16}, with $N = 476\pm 21$.}
\label{fig:Z2,4,Inf}
\end{figure}

Increasing the number of measurement directions allows for the detection of Bell correlations in additional states. In the limit $m\to\infty$, we find (see Appendix B):
\begin{align}
\zeta_a^2\geq Z_{\infty}(\C_b) = 1-\frac{\C_b}{\text{artanh}\left(\C_b\right)} \, .
\label{eq:ZInf} 
\end{align}

Figure~\ref{fig:Z2,4,Inf} shows the two witnesses~\eqref{eq:Z2} and~\eqref{eq:ZInf} together with the one obtained similarly for $m=4$ settings in the $\C_b$-$\zeta_a^2$ plane. The curve $Z_\infty$ reaches the point $\C_b=\zeta_a^2=1$, therefore allowing in principle for the detection of Bell correlations in presence of arbitrarily low squeezing. It is known, however, that some values of $\C_b$ and $\zeta_a^2$ can only be reached in the limit of a large number of spins~\cite{Sorensen01}. For any fixed $N$, a finite amount of squeezing is thus necessary in order to allow for the violation of our witness (see Appendix C). The corresponding upper bound on $\zeta_a^2$ is shown in Figure~\ref{fig:zetaUpper}.

\begin{figure}
\includegraphics[width=0.95\linewidth]{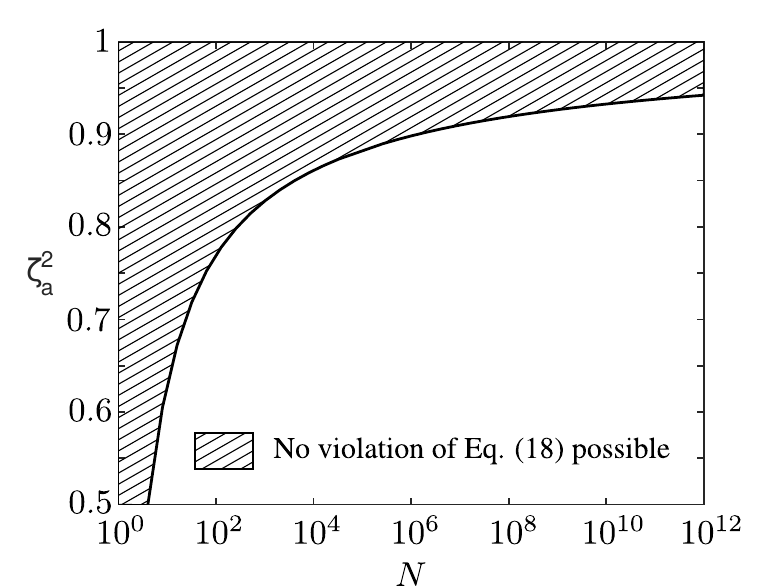}
\caption{Upper bound on the value of $\zeta_a^2$ required to see a violation of the Bell correlation witness~\eqref{eq:ZInf}. The bound depends on the number of particles $N$.}
\label{fig:zetaUpper}
\end{figure}

Points below the curve $Z_m$ in Fig.~\ref{fig:Z2,4,Inf} indicate a violation of the witness $\mathcal{W}_{m} \geq 0$ obtained from the corresponding $m$-settings Bell inequality. Violation of any such bound reveals the presence of a Bell-correlated state. However, as discussed in the introduction, conclusions in the presence of finite statistics have to be examined carefully, since in practice, one can never conclude from the violation of a witness that the measured state is Bell correlated with $100\%$ confidence. The point shown in the inset of Fig.~\ref{fig:Z2,4,Inf} corresponds to the data reported in Ref.~\cite{Schmied16} from measurements on a spin-squeezed Bose-Einstein condensate. This point clearly violates the witnesses for $m=2,4,\infty$ by several standard deviations, although the number of measurement rounds is too small to guarantee that the measured state is Bell correlated without further assumptions~\cite{Schmied16}.

\paragraph{Finite Statistics --}
In this section, we put a bound on the number of experimental runs needed to exclude with a given confidence that a measured state is not Bell-correlated. Note that such a conclusion does not follow straightforwardly from the violation of the witness by a fixed number of standard deviations. Indeed, standard deviations inform on the precision of a violation, but fail at excluding arbitrary local models~\cite{Zhang11}, including e.g.\ models which may showing non-gaussian statistics with rare events. We thus look here for a number of experimental runs that is sufficient to guarantee a p-value lower than a given threshold for the null hypothesis `The measured state is not Bell-correlated'. Since we are concerned with the characterization of physical systems in the absence of an adversary, we assume that the same state is prepared in each round (i.i.d.\! assumption).

For this statistical analysis, let us consider a different Bell correlation witness than~\eqref{eq:ZInf}. Indeed, we derived this inequality in order to maximize the amount of violation for given data, but here we rather wish to maximize the statistical evidence of a violation. For this, we take~\eqref{eq:witness} and consider the representation of the angles given in Eq. \eqref{eq:OptAng}, but without taking Eq. \eqref{eq:Selfcons} into account. In the limit of infinitely-many measurement settings, we find (see Appendix B)
\begin{align}
\mathcal{W_\text{stat}} =-\C_b \Delta_\nu &-(1-\zeta_a^2)\Lambda_\nu^2+\frac{1}{4} \geq 0 \, \text{, with} \label{eq:finalW}\\
\Delta_\nu=\frac{\sqrt{1+\nu^2}}{4\nu}-&\frac{\mathrm{arsinh}(\nu)}{4\nu^2} \, ,\quad \Lambda_\nu= \frac{\mathrm{arsinh}(\nu)}{2\nu}\, , \label{eq:DeltaLambda}
\end{align}
where $\nu=\lim\limits_{m\to\infty}\lambda_m\cdot m$ is a free parameter that fully specifies the set of measurement angles. 

In order to model the experimental evaluation of $\mathcal{W_\text{stat}}$, we introduce the following estimator:
\begin{equation}
\begin{split}
\mathcal{T}=& \frac{\chi(Z=0)}{q}X + \frac{\chi(Z=1)}{1-q}Y + (\frac14-\Delta_\nu-\Lambda_\nu^2) \, .
\end{split}
\end{equation}
Here, $\chi$ denotes the indicator function and the binary random variable $Z$ accounts for the choice between the measurement of either $\C_b$ or $\zeta_a$. Each measurement allows for the evaluation of the corresponding random variables $X=\Delta_\nu(1-\C_b)$ and $Y=\Lambda_\nu^2\zeta_a^2$. Assuming that $Z$ is independent of $X$ and $Y$ and choosing $q=P[Z=0]$ guarantees that $\mathcal{T}$ is a proper estimator of $\mathcal{W}_\text{stat}$, i.e. $\langle\mathcal{T}\rangle=\mathcal{W}$. $q$ then corresponds to the probability of performing a measurement along the $b$ axis. We choose $q=\left(1+\frac{\Lambda_\nu^2 N}{2\Delta_\nu}\right)^{-1}$ so that the contributions of both measurements to $\mathcal{T}$ have the same magnitude, i.e. the maximum values of $X/q$ and $Y/(1-q)$ are equal within the domain $|\C_b|\leq1$ and $\zeta_a^2\in[0,N]$. This also guarantees that the spectrum of $\mathcal{T}$ matches that of $\mathcal{W}_\text{stat}$.

Suppose the measured state is non-Bell-correlated, i.e. that its mean value $\mu=\langle \mathcal{T}\rangle= \mathcal{W}_\text{stat} \geq 0$. 
We are now interested in the probability that after $M$ experimental runs the estimated value $T=\frac1M \sum_{i=1}^M \mathcal{T}_i$ of the witness $\mathcal{W}_\text{stat}$ falls below a certain value $t_0<0$, with $\mathcal{T}_i$ being the value of the estimator in the $i^\text{th}$ run.

In statistics, concentration inequalities deal with exactly this issue. In Appendix D, we compare four of these inequalities, namely the Chernoff, Bernstein, Uspensky and Berry-Esseen ones and show explicitly that in the regime of interest the tightest and therefore preferred bound results from the Bernstein inequality:
\begin{align}
P[T \leq t_0]\leq \exp\left(-\frac{(\mu-t_0)^2 M}{2\sigma_0^2+\frac23 (t_u-t_l) (\mu-t_0)}\right) \leq \varepsilon\, .
\label{eq:Bern}
\end{align} 
Here, $t_0$ is the experimentally observed value of $T$ after $M$ measurement rounds, $t_l=\frac14-\Delta_\nu-\Lambda_\nu^2$ and $t_u=\frac14+\Delta_\nu+\Lambda_\nu^2(N+1)$ are lower and upper bounds on the random variable $\mathcal{T}$ respectively, and $\sigma_0^2$ is its variance for a local state. 

We show in Appendix D that the largest p-value is obtained by setting $\mu=0$ and $\sigma_0^2=-t_l t_u$.
A number of measurements sufficient to exclude the null hypothesis with a probability larger than $1-\varepsilon$ is then given by:
\begin{align}
M\geq \frac{-2t_l t_u -\frac23(t_u-t_l)t_0}{t_0^2}\ln\left(\frac1\varepsilon\right) \, . 
\label{eq:MBern}
\end{align}
This quantity can be minimized by choosing the free parameter $\nu$ appropriately.
As shown in Appendix D, optimizing $\nu$ at this stage allows us to reduce the number of measurement by $\sim\!\!30$\%. It is thus clearly advantageous not to consider the witness~\eqref{eq:ZInf} when evaluating statistical significance.

The number of runs in~\eqref{eq:MBern} depends linearly on $t_l$ and therefore also linearly on $N$. The ratio $\frac{M}{N}$ thus tends to a constant for large $N$ (see Appendix D for more details). This implies that a number of measurements growing linearly with the system size is both necessary and sufficient to close the statistics loophole~\cite{Schmied16}.

Figure \ref{fig:MNvsCb} depicts the required number of experimental runs per spin as a function of the scaled collective spin $\C_b$ and of the scaled second moment $\zeta_a^2$. For a confidence level of $1-\epsilon=99\%$, between 20 and ~500 measurement runs per spins are required in the considered parameter region.

\begin{figure}
\centering
\includegraphics[width=\linewidth]{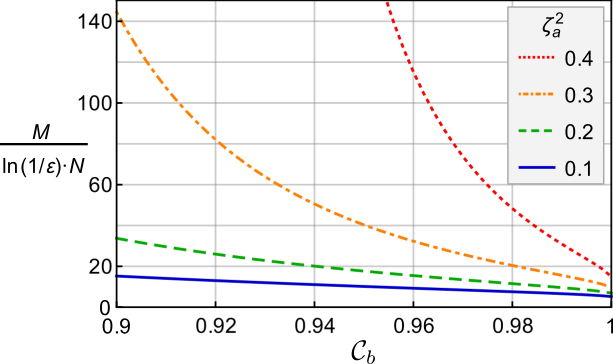}
\caption{Number of experimental runs per spins required to rule out non-Bell-correlated states with a confidence of $1-\varepsilon$ as a function of $\C_b$ and $\zeta_a$.
For $\C_b=0.98$ and $\zeta_a^2=0.272$ (as reported in \cite{Schmied16}), approximately $17\cdot\ln(100)\simeq80$ runs per spin are sufficient to reach a confidence level of $99\%$.}
\label{fig:MNvsCb}
\end{figure}

\paragraph{Conclusion --}
In this paper, we introduce a class of multipartite Bell inequalities involving two-body correlators and an arbitrary number of measurement settings. Assuming collective spin measurements, these inequalities give rise to the witness~\eqref{eq:ZInf}, which can be used to determine whether Bell correlations can be detected in a many-body system. This criterion detects states that were not detected by the previously-known witness~\cite{Schmied16}.

We then discuss the statistics loophole arising in experiments involving many-body systems, i.e. the difficulty of ruling out, without further assumptions, non-Bell-correlated states in the presence of finite statistics. We provide a bound, Eq.~\eqref{eq:MBern}, on the number of measurement rounds that allows one to close this loophole. This bound shows that all non-Bell-correlated states can be convincingly ruled out at the cost of performing a number of measurements that grows linearly with the system size. This opens the way for a demonstration of Bell-correlations in a many-body system free of the statistics loophole.

\paragraph{Acknowledgements --} We thank Baptiste Allard, Remik Augusiak and Valerio Scarani for helpful discussions. This work was supported by the Swiss National Science Foundation (SNSF) through grants PP00P2-150579, 20020-169591 and NCCR QSIT. NS acknowledges the Army Research Laboratory Center for Distributed Quantum Information via the project SciNet.

\onecolumngrid
\appendix

\section{Proof of the Bell inequalities}

In this appendix, we expand on the proof of Ineq.~\eqref{eq:symBell} given in the main text, and cover the case of odd numbers of measurements.

\subsection{Symmetric Bell inequality for $m$ measurements}
We consider local measurements on $N$ parties. For each party, one can choose between $m$ measurements $M_k^{(i)}$, where $k \in \{0,1,...,m-1\}$ and $i \in \{1,...,N\}$. Each measurement has the two possible outcomes $\pm 1$. We are interested in Bell inequalities, i.e. inequalities every local theory has to obey \cite{Brunner14}. We only consider one- and two-body mean values, so the general form of such an inequality is
\begin{align}
I_{N,m} &= \sum_{k=0}^{m-1}\sum_{i=1}^N \alpha_k^i \langle M_k^{(i)} \rangle + \sum_{k,l} \sum_{i < j} \beta_{kl}^{ij} \langle M_k^{(i)} M_l^{(j)} \rangle \geq -\beta_c \, ,
\label{eq:generalI}
\end{align}
where $\langle M_k^{(i)} \rangle=\sum_{a\in\{-1,1\}}a\, \text{Prob}(M_k^{(i)}=a)$ and $\langle M_k^{(i)} M_l^{(j)} \rangle=\sum_{a,b\in\{-1,1\}}ab\, \text{Prob}(M_k^{(i)}=a,M_l^{(j)}=b)$.

We now restrict ourselves to Bell inequalities which are symmetric under exchange of parties, i.e. $\alpha_k^i=\alpha_k$ and $\beta_{kl}^{ij}=\beta_{kl}$. After defining the symmetrized correlators 
\begin{align}
S_k = \sum_{i=1}^N \langle M_k^{(i)} \rangle \quad , \quad S_{kl} = \sum_{i\neq j} \langle M_k^{(i)}M_l^{(j)}\rangle \, ,
\label{eq:Cor}
\end{align}
symmetric inequalities can be expressed as
\begin{align}
I_{N,m} = \sum_{k=0}^{m-1}\alpha_k S_k +\frac{1}{2}\sum_{k,l}^{m-1}\beta_{kl} S_{kl} \geq -\beta_c \, .
\end{align}

We are interested in cases for which the coefficients are $\alpha_k=m-2k-1$ ($ k=0,...,m-1$) and $\beta_{k,l}=1$. We note that $\alpha_{m-k-1}=-\alpha_k$ and claim that local theories have to fulfill the Bell inequalities
\begin{align}
I_{N,m} &= \sum_{k=0}^{m-1}(m-2k-1)S_k +\frac{1}{2}\sum_{k,l}^{m-1}S_{kl} \geq -\left\lfloor\frac{m^2N}{2}\right\rfloor = -\beta_c \, ,
\label{eq:final_I}
\end{align}
where $\lfloor x\rfloor$ is the largest integer smaller or equal to $x$.

\subsection{Computation of the local bound}

In this section we prove the claim above. One of the most important properties of a local theory is its equivalence to a mixture of deterministic local theory. That is why, by considering only deterministic theories, there is no loss of generality. We can therefore assume that a measurement $M_k^{(i)}$ will lead to an outcome $x_k^i=\pm 1$ with probability $1$, i.e. $\langle M_k^{(i)} \rangle = x_k^i$. The two-body correlators $S_{kl}$ can thus be expressed as $S_{kl}=S_k S_l-\sum_{i=1}^N x_k^i x_l^i$. By also taking the antisymmetry of $\alpha_k$ into account, and introducing the quantities
\begin{align}
A=\sum_{k=0}^{\left\lfloor\frac{m}{2}\right\rfloor -1}(m-2k-1) (S_k-S_{m-k-1})\ ,\quad
B=\sum_{k=0}^{m-1} S_k\ ,\quad
C=\sum_{i=1}^N \left[\sum_{k=0}^{m-1}x_k^i\right]^2 \, .
\label{eq:detI}
\end{align}
we arrive at
\begin{align}
I_{N,m} = A + \frac12 B^2 - \frac12 C\, .
\end{align}

\subsubsection{Strategy variables}
We want to rewrite $I_{N,m}$ further. Therefore we introduce variables counting the strategies chosen by the parties. 
Because there are $m$ measurements with binary outcomes, the number of possible strategies per party is $2^m$. We define the following $2^m$ variables: 
\begin{align}
a_{j_1<...<j_n} :&=\#\{i \in \{1,...,N\} \vert x_k^i = -1 \text{ iff } k \in \{j_1,...,j_n\}\} \quad\text{ for } n\leq  \left\lfloor\frac{m}{2}\right\rfloor \, , \nonumber \\
\bar{a}_{j_1<...<j_n} :&=\#\{i \in \{1,...,N\} \vert x_k^i = +1 \text{ iff } k \in \{j_1,...,j_n\}\} \quad \text{ for } n\leq  \left\lfloor\!\frac{m-1}{2}\!\right\rfloor \, , \nonumber\\
\bar{a}_{j_1,...,j_{\left\lceil\!\frac{m}{2}\!\right\rceil}} &\equiv 0 \, ,
\label{eq:Strategy}
\end{align}
where $\#$ denotes the set cardinality. For example, $a_j$ counts the parties $k$ whose outcomes are $x_k^{j'}=1-2\delta_{jj'}$. $\bar{a}_j$ is the number of parties following the opposite strategy. Variables with $n$ indices thus correspond to a strategy for which either exactly $n$ of the $m$ outcomes are $+1$ or exactly $n$ of the outcomes are $-1$, i.e. $n$ outcomes differ from the rest. Note that the conjugate variables in the case of $\frac{m}{2}$ indices are set to zero for the case of even $m$ in order to prevent strategies from being counted twice. Since every party has to choose one strategy, the variables sum up to $N$, i.e.
\begin{align}
a+\bar{a}+\sum_{j=0}^{m-1}(a_j+\bar{a}_j) + ... =\sum_{n=0}^{\left\lfloor\frac{m}{2}\right\rfloor}\sum_{j_1<...<j_n} (a_{j_1...j_n}+\bar{a}_{j_1...j_n})=N \, \text{.} 
\label{eq:SumStrategies}
\end{align} 

Note that $S_k$ can be expressed in terms of the strategy variables as follows:
\begin{align}
&S_k = \sum_{n=0}^{\left\lfloor\frac{m}{2}\right\rfloor}\sum_{j_1<...<j_n}\left(a_{j_1...j_n}-\bar{a}_{j_1...j_n}\right)y_k^{j_1...j_n} \text{ , with} \label{eq:S_k}\\
&y_k^{j_1...j_n} =\begin{cases}-1 \quad \text{if }k\in\{j_1,...,j_n\} \\+1 \quad \text{else}\end{cases}  .\label{eq:outcomes}
\end{align}

\subsubsection{Decomposition of $A$ and $C$ in terms of the strategy variables}
Equation \eqref{eq:S_k} results in the following representation of $S_k-S_l$:
\begin{align}
S_k-S_l = \sum_{n=0}^{\left\lfloor\frac{m}{2}\right\rfloor}\sum_{j_1<...<j_n}\left(a_{j_1...j_n}-\bar{a}_{j_1...j_n}\right)\left(y_k^{j_1...j_n}-y_l^{j_1...j_n}\right) \text{ .}
\end{align}
Clearly, a variable only appears in this expression if $y_k \neq y_l$, i.e. if the strategy is such that the outcome of the $k^{th}$ measurement differs from the $l^{th}$. This, for example, cannot be the case if the number of indices is zero, i.e. if all measurement outcomes are the same. So $a$ and $\bar{a}$ do not show up in $S_k-S_l$. 

With the help of the introduced strategy variables, we can express $A$ as
\begin{align}
A=\sum_{k=0}^{\left\lfloor\frac{m}{2}\right\rfloor -1}(m-2k-1) \sum_{n=0}^{\left\lfloor\frac{m}{2}\right\rfloor}\sum_{j_1<...<j_n}\left(a_{j_1...j_n}-\bar{a}_{j_1...j_n}\right)\left(y_k^{j_1...j_n}-y_{m-k-1}^{j_1...j_n}\right) \text{ .}
\end{align}
and $C$ as
\begin{align}
C &= m^2(a+\bar{a}) + (m-2)^2\sum_{j=0}^{m-1}(a_j+\bar{a}_j)+(m-4)^2\sum_{j_1<j_2}(a_{j_1j_2}+\bar{a}_{j_1j_2})+... \nonumber \\
  &= \sum_{n=0}^{\left\lfloor\frac{m}{2}\right\rfloor}(m-2n)^2\sum_{j_1<...<j_n}(a_{j_1...j_n}+\bar{a}_{j_1...j_n}) \, .\label{eq:C}
\end{align}
In other words, we notice that if a variable has $n$ indices it contributes to $C$ with a factor $(m-2n)^2$.

\subsubsection{A bound independent of the number of indices}
In this section, we study the contributions of $A$ and $C$ to $I_{N,m}$. 
For this, we make use of the following theorem:

\begin{TH1}
A strategy with $n$ equal outcomes satisfies the inequality
\begin{align*}
\sum_{k=0}^{\left\lfloor\frac{m}{2}\right\rfloor-1}\left\vert y_k^{j_1...j_n}-y_{m-k-1}^{j_1...j_n}\right\vert \leq 2n \text{ .}
\end{align*}
\label{Th1}
\end{TH1}
\begin{proof}
First we note that the summation is such that no $y_k^{j_1...j_n}$ appears twice. Also, we know that since we consider binary outcomes, $\left\vert y_k-y_{m-k-1}\right\vert$ is either $0$ or $2$. We thus have
\begin{align*}
\sum_{k=0}^{\left\lfloor\frac{m}{2}\right\rfloor-1}\left\vert y_k^{j_1...j_n}-y_{m-k-1}^{j_1...j_n}\right\vert = 2l \, ,\quad l \in \mathbb{N} \, .
\end{align*}
Assume now that the above inequality is violated, i.e. $l>n$.

$\Leftrightarrow y_k^{j_1...j_n} \neq y_{m-k-1}^{j_1...j_n}$ for $l>n$ values of $k$.

$\Leftrightarrow$ The strategy $(j_1,...,j_n)$ has $l$ differing outcomes.

This is a contradiction to the definition of the strategy. Therefore the assumption must be wrong and the inequality holds for all strategies.  
\end{proof}
\begin{COR}
A function $f(k)$ which is monotonically decreasing with $k$, satisfies
\begin{align*}
\sum_{k=0}^{\left\lfloor\frac{m}{2}\right\rfloor-1} f(k)\left\vert y_k^{j_1...j_n}-y_{m-k-1}^{j_1...j_n}\right\vert \leq 2\sum_{k=0}^{n-1} f(k) \, .
\end{align*} 
\label{Co1}
\end{COR}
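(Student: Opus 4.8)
The plan is to deduce Corollary~\ref{Co1} directly from Theorem~\ref{Th1} by a short rearrangement argument. As in the proof of Theorem~\ref{Th1}, I would first note that, since outcomes are binary, each term $\left\vert y_k^{j_1...j_n}-y_{m-k-1}^{j_1...j_n}\right\vert$ is either $0$ or $2$. Writing
\[
S:=\left\{k\in\left\{0,\dots,\left\lfloor\tfrac{m}{2}\right\rfloor-1\right\}\ :\ y_k^{j_1...j_n}\neq y_{m-k-1}^{j_1...j_n}\right\},\qquad l:=\#S,
\]
the left-hand side of the corollary equals exactly $2\sum_{k\in S}f(k)$, and Theorem~\ref{Th1} gives $l\le n$ (this is precisely the content of the identity $\sum_k |y_k-y_{m-k-1}|=2l$ with $l\le n$ established there).

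The second step is to show $\sum_{k\in S}f(k)\le\sum_{k=0}^{l-1}f(k)$, i.e.\ that a monotonically decreasing sequence summed over an index set of fixed cardinality $l$ is largest on the initial segment $\{0,\dots,l-1\}$. I would prove this by an exchange argument: if $S\neq\{0,\dots,l-1\}$, choose $k\in S$ and $k'\notin S$ with $k'<k$; since $f$ is decreasing, $f(k')\ge f(k)$, so replacing $k$ by $k'$ in $S$ keeps $\#S=l$ and does not decrease $\sum_{k\in S}f(k)$. Iterating this (it terminates in at most $l$ steps) turns $S$ into $\{0,\dots,l-1\}$ without ever decreasing the sum, giving the claimed bound.

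Combining both steps with $l\le n$ yields
\[
\sum_{k=0}^{\left\lfloor\frac{m}{2}\right\rfloor-1} f(k)\left\vert y_k^{j_1...j_n}-y_{m-k-1}^{j_1...j_n}\right\vert
= 2\sum_{k\in S}f(k)\le 2\sum_{k=0}^{l-1}f(k)\le 2\sum_{k=0}^{n-1}f(k),
\]
as desired. The final inequality uses that the extra terms $f(l),\dots,f(n-1)$ are nonnegative; this is automatic in the application of interest, where $f(k)=\alpha_k=m-2k-1\ge 1$ for every $k\le\left\lfloor\frac{m}{2}\right\rfloor-1$, so one may either invoke it implicitly or simply add ``$f\ge 0$ on the relevant range'' to the hypotheses.

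The only step requiring any care is the rearrangement inequality of the second paragraph; everything else is immediate from Theorem~\ref{Th1}. Since the exchange argument is elementary and terminates quickly, I do not anticipate a real obstacle — the point is merely to record that placing the at-most-$n$ nonzero unit contributions on the indices with the largest weights $f(k)$ is the worst case, which is exactly what makes $2\sum_{k=0}^{n-1}f(k)$ (equivalently $-2\sum_{k=0}^{n-1}\alpha_k=2n(n-m)$ in the main-text application) the correct bound.
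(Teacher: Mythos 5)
Your proof is correct and follows essentially the same route as the paper's: Theorem~\ref{Th1} limits the number of nonzero terms (each equal to $2f(k)$) to at most $n$, and monotonicity of $f$ lets you replace that index set by the initial segment $\{0,\dots,n-1\}$, which is exactly the paper's argument with the exchange step spelled out. Your side remark that padding from $l<n$ nonzero terms up to $n$ requires $f\geq 0$ on the relevant range is a point the paper leaves implicit, and it holds in the intended application since $f(k)=\alpha_k=m-2k-1\geq 1$ for $k\leq\left\lfloor\frac{m}{2}\right\rfloor-1$.
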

\begin{proof}
Theorem \ref{Th1} implies that $y_k \neq y_{m-k-1}$ for at most $n$ values of $k$. Taking into account that $f(k)$ is monotonically decreasing, we find that 
\begin{align*}
\sum_{k=0}^{\left\lfloor\frac{m}{2}\right\rfloor-1} f(k)\left\vert y_k^{j_1...j_n}-y_{m-k-1}^{j_1...j_n}\right\vert \leq \sum_{\substack{\text{n values} \\ \text{of k}}} f(k)\cdot 2 \leq 2 \sum_{k=0}^{n-1} f(k) \, .
\end{align*} 
\end{proof}

With the help of Corollary~\ref{Co1}, we rewrite the quantity $A$ as
\begin{align}
A &=\sum_{n=0}^{\left\lfloor\frac{m}{2}\right\rfloor}\sum_{j_1<...<j_n}\left(a_{j_1...j_n}-\bar{a}_{j_1...j_n}\right)\sum_{k=0}^{\left\lfloor\frac{m}{2}\right\rfloor-1}\alpha_k \left(y_k^{j_1...j_n}-y_{m-k-1}^{j_1...j_n}\right)\nonumber\\
&\geq \sum_{n=0}^{\left\lfloor\frac{m}{2}\right\rfloor}\sum_{j_1<...<j_n}\left(a_{j_1...j_n}-\bar{a}_{j_1...j_n}\right)(-2)\sum_{k=0}^{n-1} \alpha_k 
= -2\sum_{n=0}^{\left\lfloor\frac{m}{2}\right\rfloor}n(m-n)\sum_{j_1<...<j_n}\left(a_{j_1...j_n}-\bar{a}_{j_1...j_n}\right) \, .
\end{align}
Making use of Eq.~\eqref{eq:C} and \eqref{eq:SumN}, we then find that 
\begin{align}
A-\frac12C &\geq \sum_{n=0}^{\left\lfloor\frac{m}{2}\right\rfloor}\left[-2n(m-n)-\frac{(m-2n)^2}{2}\right]\sum_{j_1<...<j_n}\left(a_{j_1...j_n}-\bar{a}_{j_1...j_n}\right)  \nonumber\\
&= -\frac{m^2}{2}\sum_{n=0}^{\left\lfloor\frac{m}{2}\right\rfloor}\sum_{j_1<...<j_n}\left(a_{j_1...j_n}-\bar{a}_{j_1...j_n}\right) = -\frac{m^2}{2}N
\end{align}

\subsubsection{Putting the pieces together}
In order to conclude the proof, we now only miss the contribution of the term $B$. For this, we look at the case of even and odd $m$ separately.
When $m$ is even, $B=\sum_k S_k$ is also even. We thus find that $B^2 \geq 0$. This means that 
\begin{align}
I_{N,m} \geq A-\frac12 C \geq -\frac{m^2 N}{2} \, .
\end{align}
If $m$ is odd, $B$ shares the parity of $N$. That is why we have $B^2\geq 0$ for even $N$, and  $B^2\geq 1$ for odd $N$,  resulting in 
\begin{align}
I_{N,m} \geq \begin{cases} -\frac{m^2 N}{2} &\text{ for even N}\\ -\frac{m^2 N}{2}+\frac{1}{2} \quad &\text{ for odd N} \end{cases} \text{ .}
\end{align}

In general, the classical bound is thus $\beta_c=\left\lfloor \frac{m^2 N}{2} \right\rfloor$.

\section{Optimization of the witnesses}
In this appendix, we optimize the witnesses $\mathcal{W}_m$ as given in Eq.~\eqref{eq:witness} of the main text over the measurement angles.  
Let us remind the form of $\mathcal{W}_m$:
\begin{align}
\mathcal{W}_{m} = \C_b {\sum_{k=0}^{\frac{m}{2}-1} \alpha_k \sin(\vartheta_k)} \,{-} {\left(1-\zeta_a^2\right)} {\left[\sum_{k=0}^{\frac{m}{2}-1}\cos(\vartheta_k)\right]^2}\! {+}\, \frac{m^2}{4} \, .
\label{eq:witness.A}
\end{align}
We do this optimization by searching for those angles leading to the minimum of $\mathcal{W}_m$. This is equivalent to solving the system of equations arising from $\frac{\partial \mathcal{W}_m}{\partial \vartheta_k}=0$:
\begin{align}
\frac{\partial\mathcal{W}_m}{\partial\vartheta_k}=(m-2k-1)\C_b\cos(\vartheta_k)+2\sin(\vartheta_k)\left(1-\zeta_a^2\right)\sum_{l=0}^{\frac{m}{2}-1}\cos(\vartheta_l) = 0 \, . \label{eq:Partial1}
\end{align}
We eventually want to find angles such that $\mathcal{W}_m$ is negative. To achieve this, the last term of Eq. \eqref{eq:witness.A} must be compensated. Since $\zeta_a^2 \geq 0$, the second term of Eq. \eqref{eq:witness.A} is bounded by $-\frac{m^2}{4}$ and thus cannot be sufficient for a negative $\mathcal{W}_m$. On the other hand, the first term is bounded by $-\frac{m^2}{4}+1$ due to the fact that $\vert \C_b\vert\leq 1$. So we find that in order to reach $\mathcal{W}_m<0$, we need $\sin(\vartheta_k)$, $\cos(\vartheta_k)$ and $\C_b$ to differ from zero and $\zeta_a^2<1$. In the following studies, we assume these necessary constraints, allowing us to rewrite Eq.~\eqref{eq:Partial1} as
\begin{align}
\frac{2(1-\zeta_a^2)}{\C_b}\sum_{l=0}^{\frac{m}{2}-1}\cos(\vartheta_l) = -(m-2k-1)\frac{\cos(\vartheta_k)}{\sin(\vartheta_k)}  \quad \forall k \, . \label{eq:Partial2}
\end{align}
Since the left side of Eq. \eqref{eq:Partial2} does not explicitly depend on $k$, this can only be achieved if both sides are equal to a constant. The assumptions about $\zeta_a^2$, $\C_b$ and the angles, as reasoned above, allow us to write
\begin{align}
\frac{2(1-\zeta_a^2)}{\C_b}\sum_{l=0}^{\frac{m}{2}-1}\cos(\vartheta_l)=\frac{1}{\lambda_m} = -(m-2k-1)\frac{\cos(\vartheta_k)}{\sin(\vartheta_k)}
\, , \label{eq:Partial3}
\end{align}
where $\lambda_m$ is a constant depending for given $\C_b$ and $\zeta_a^2$ only on $m$. We find the optimal angles
\begin{align}
\vartheta_k = -\arctan\left(\lambda_m(m-2k-1)\right) \, .\label{eq:atan} 
\end{align}
For a minimal $\mathcal{W}_m$, the constants $\lambda_m$ have to fulfill the self-consistency equations  
\begin{align}
\frac{\C_b}{2\lambda_m(1-\zeta_a^2)}=\sum_{l=0}^{\frac{m}{2}-1}\cos(\vartheta_l)=\sum_{l=0}^{\frac{m}{2}-1}\frac{1}{\sqrt{1+\lambda_m^2(m-2l-1)^2}}\, . \label{eq:selfc}
\end{align}
Here, we used the fact that $\cos(\arctan(x))=\frac{1}{\sqrt{1+x^2}}$.
For further steps, we note that $\sin(\arctan(x))=\frac{x}{\sqrt{1+x^2}}$ and define the following functions:
\begin{align}
\Lambda_m(\lambda_m) :&= \sum_{k=0}^{\frac{m}{2}-1}\frac{1}{\sqrt{1+\lambda_m^2(m-2k-1)^2}}
= \,\sum_{k=1}^{\frac{m}{2}}\frac{1}{\sqrt{1+\lambda_m^2(2k-1)^2}} , \label{eq:Lambda.A} \\
\Delta_m(\lambda_m)  :&= \sum_{k=0}^{\frac{m}{2}-1}\frac{\lambda_m(m-2k-1)^2}{\sqrt{1+\lambda_m^2(m-2k-1)^2}} 
=\sum_{k=1}^{\frac{m}{2}}\frac{\lambda_m(2k-1)^2}{\sqrt{1+\lambda_m^2(2k-1)^2}} \label{eq:Delta.A} \, .
\end{align} 
If we assume the representation of the angles given in Eq. \eqref{eq:atan}, the witnesses can be expressed as
\begin{align}
\mathcal{W}_m = - \C_b\Delta_m(\lambda_m)-\left(1-\zeta_a^2\right)\Lambda_m^2(\lambda_m) +\frac{m^2}{4} \geq 0\, , \label{eq:finalW.A} 
\end{align}
which holds for non-Bell-correlated states. Note that in this expression, we only assume the $arctan$-angle-distribution, without taking the self-consistency equations into account, i.e. without optimizing the actual differences between angles.

For the case of $m\to\infty$, we need to rewrite the above witnesses, since $\mathcal{W}_m$ diverges in this limit. We define
\begin{align}
\mathcal{W}_m':= \frac{\mathcal{W}_m}{m^2} = -\C_b\frac{\Delta_m(\lambda_m)}{m^2}-\left(1-\zeta_a^2\right)\left(\frac{\Lambda_m(\lambda_m)}{m}\right)^2 +\frac{1}{4} \geq 0 \, .
\label{eq:witness_inf}
\end{align}
We also have to rewrite the constant $\lambda_m$. We define $\lambda_m=\frac{\nu_m}{m}$ and rewrite Eq. \eqref{eq:selfc} as
\begin{align}
\frac{\C_b}{2\nu_m(1-\zeta_a^2)}=\frac{\Lambda_m\left(\frac{\nu_m}{m}\right)}{m} \, . \label{eq:selfc_nu}
\end{align}
If we define $s_k=\frac{2k-1}{m}$, we see that $\frac1m$ can be expressed as $\frac{s_{k+1}-s_k}{2}$. Note that for $m\to\infty$, $s_1\to 0$ and $s_{m/2}\to 1$. Using the convention $\nu_\infty=\nu$, we find
\begin{align}
\Lambda_\nu :=\lim_{m \to \infty} \frac{\Lambda_m(\nu_m/m)}{m} &= \lim_{m \to \infty} \sum_{k=1}^{\frac{m}{2}}\frac{1}{\sqrt{1+\nu_m^2\frac{(2k-1)^2}{m^2}}} \cdot\frac{1}{m} =\lim_{m \to \infty}\sum_{k=1}^{\frac{m}{2}}\frac{1}{\sqrt{1+\nu_m^2 s_k^2}}\frac{s_{k+1}-s_k}{2} \nonumber\\
		&= \frac12\int_0^1 \frac{1}{\sqrt{1+\nu^2 s^2}}ds =\frac{\mathrm{arsinh}(\nu)}{2\nu} \label{eq:LambdaInf}\\
\Delta_\nu:= \lim_{m \to \infty} \frac{\Delta_m(\nu_m/m)}{m^2} &= \lim_{m \to \infty}\sum_{k=1}^{\frac{m}{2}}\frac{\nu_m \frac{(2k-1)^2}{m^2}}{\sqrt{1+\nu_m^2\frac{(2k-1)^2}{m^2}}}\cdot\frac{1}{m} = \lim_{m \to \infty}\sum_{k=1}^{\frac{m}{2}} \frac{\nu_m s_k^2}{\sqrt{1+\nu_m^2 s_k^2}}\frac{s_{k+1}-s_k}2 \nonumber \\
&= \frac12 \int_0^1 \frac{\nu s^2}{\sqrt{1+\nu^2s^2}}ds = \frac{\sqrt{1+\nu^2}}{4\nu}-\frac{\mathrm{arsinh}(\nu)}{4\nu^2} \label{eq:DeltaInf} \, .
\end{align}
This yields the witness
\begin{align}
\mathcal{W}_\infty'&=-\C_b\Delta_\nu - (1-\zeta_a^2)\Lambda_\nu+\frac14\\
&= -\C_b\left(\frac{\sqrt{1+\nu^2}}{4\nu}-\frac{\mathrm{arsinh}(\nu)}{4\nu^2}\right) -(1-\zeta_a^2)\frac{\mathrm{arsinh}^2(\nu)}{4\nu^2} +\frac14 \geq 0 \, .
\label{eq:WInf}
\end{align}   

We now search for those points in the $\C_b$-$\zeta_a^2$-plane that allow for a violation of the correlation witnesses of Ineq.~\eqref{eq:finalW.A} and \eqref{eq:WInf}. For this purpose, we assume the optimal angles given in Eq.~\eqref{eq:selfc} and \eqref{eq:selfc_nu} respectively. 
We define $Z_m(\C_b)$ to be the scaled second moment, as a function of the scaled collective spin, such that $\mathcal{W}_m$ vanishes.
 
\subsubsection{$m=2$}
In the case of $m=2$ measurement settings, we have to solve the following system of equations in order to find $Z_2$:
\begin{align}
0 =\mathcal{W}_2 &= -\C_b \frac{\lambda_2}{\sqrt{1+\lambda_2^2}}-(1-\zeta_a^2)\frac{1}{1+\lambda_2^2}+1 \, ,\\
\frac{\C_b}{2\lambda_2(1-\zeta_a^2)} &= \frac{1}{\sqrt{1+\lambda_2^2}} \, .
\end{align} 
From these, we find the critical line $Z_2$ and therefore the following condition, satisfied by every non-Bell-correlated state:
\begin{align}
\zeta_a^2 \geq Z_2(\C_b) = \frac12\left(1-\sqrt{1-\C_b^2}\right) \, . \label{eq:Z2.A}
\end{align}  

\subsubsection{Limit $m\to \infty$ }
The critical line $Z_\infty$ is determined by solving the following equation for $\zeta_a^2$
\begin{align}
 0 = \mathcal{W}_\infty' = -\C_b\left(\frac{\sqrt{1+\nu^2}}{4\nu}-\frac{\mathrm{arsinh}(\nu)}{4\nu^2}\right) -(1-\zeta_a^2)\frac{\mathrm{arsinh}^2(\nu)}{4\nu^2} +\frac14 \, , \quad\text{where } \nu=\sinh\left(\frac{\C_b}{1-\zeta_a^2}\right) \, .
 \label{eq:bestNu}
\end{align}
We find that any non-Bell-correlated state satisfies 
\begin{align}
\zeta_a^2\geq Z_\infty(\C_b)=1-\frac{\C_b}{\mathrm{artanh}(\C_b)} \, . \label{eq:ZInf.A}
\end{align}

\section{Squeezing requirement}

Here, we find a bound on the amount of squeezing that is needed as a function of the number of spins $N$ in order to violate the Bell correlation witness~\eqref{eq:ZInf} described in the main text.
Due to the structure of spin systems, the first moment $\C_b$, the second moment $\zeta_a^2$ and the number of spins $N$ satisfy the following constraints~\cite{Sorensen01}:
\begin{equation}
\zeta_a^2 \geq 1-\frac{N}{2}\left[\sqrt{(1-\C_b^2)\left[\left(1+\frac{2}{N}\right)^2 - \C_b^2\right]} + \C_b^2 - 1\right]
\end{equation}
(notice that there is an error in the expression given in the reference). For any number of spins $N$, equating the right-hand side of this constraint with the right-hand side of Eq.~\eqref{eq:ZInf} gives the maximum value of $\C_b$ under which a violation of the witness~\eqref{eq:ZInf} is possible. The corresponding maximum value of $\zeta_a^2$ is plotted as a function of the number of spins $N$ in Fig.~\ref{fig:zetaUpper}. For large $N$, this function can be expanded as
\begin{equation}
Z_N^* = 1-\frac{1}{\omega}-\frac1{2\omega^3}-\frac3{4\omega^4}-O(\omega^{-5})
\end{equation}
where $\omega=W_{-1}\left(-\frac{1}{2\sqrt{N+1}}\right)$ and $W_{-1}$ is the lower branch of the Lambert $W$ function.

\section{Finite statistics}
In this appendix, we introduce four concentration inequalities and determine their bound on the p-value for generic non-Bell-correlated states. We also compare these p-values and choose the optimal one to estimate a number of experimental runs sufficient to exclude non-Bell-correlated states with a confidence $1-\varepsilon$. Eventually we minimize this number of runs by optimizing the measurement angles.
 
\subsection{Concentration inequalities}
In statistics, concentration inequalities bound the probability that a random variable $X$ exceeds or falls below a certain value.
In what follows, we recall the definitions of some of these inequalities. We then discuss some of their properties in view of our problem in the following section.

\subsubsection{Chernoff bound}
The following version of the Chernoff bound was proven by Van Vu at the University of California, San Diego \cite{Chernoff}. It was done for discrete, independent random variables. However, the bound also applies in the case of continuous random variables. 
\begin{TH1}
Let $X_1,...,X_M$ be independent random variables with $\vert X_i \vert \leq 1$ and expectation values $E[X_i]=0$ for all $i$. Let $X=\sum\limits_{i=1}^M X_i$ and $\sigma^2$ be the variance of $X$. Then 
\begin{align*}
&P[X \leq - \lambda \sigma] \leq \exp\left(-\frac{\lambda^2}{4}\right) \, , &\text{for } 0 \leq \lambda \leq 2\sigma \, ,\quad\\
&P[X\leq -x_0] \leq  \exp\left(-\frac{x_0^2}{4\sigma^2}\right)  \, ,  &\text{for } 0 \leq x_0 \leq 2\sigma^2 \, .\,
\end{align*}
\label{th:Chern}
\end{TH1}
\begin{COR}
Let $X_1,...,X_M$ be independent random variables with $E[X_i]=\mu$ and $a\leq X_i\leq b$ for all $i$. Let $X=\frac1M \sum\limits_{i=1}^M X_i$, $\sigma_i^2=\mathrm{Var}[X_i]$ and $\sigma_0^2=\max\limits_i\left\{\sigma_i^2\right\}$. Then 
\begin{align*}
&P\left[X\leq x_0\right] \leq \exp\left( -\frac{(\mu-x_0)^2 M}{4 \sigma_0^2}\right) \, , &\text{for } \mu \geq x_0 \geq \mu-\frac{\sum_i \sigma_i^2}{M(b-a)} \, .
\end{align*}
\label{cor:Chern}
\end{COR}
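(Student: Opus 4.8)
The plan is to deduce Corollary~\ref{cor:Chern} from Theorem~\ref{th:Chern} by a single affine change of variables that centers the $X_i$ and rescales them to lie in $[-1,1]$. First I would set $Y_i := (X_i-\mu)/(b-a)$ for $i=1,\dots,M$. These are independent (inheriting independence from the $X_i$) and have $E[Y_i]=0$ by construction; moreover, since $a\le X_i\le b$ and $a\le\mu=E[X_i]\le b$ both lie in $[a,b]$, one has $|X_i-\mu|\le b-a$ and hence $|Y_i|\le 1$. Writing $Y:=\sum_{i=1}^M Y_i$, it follows that $Y=\frac{M}{b-a}(X-\mu)$ and $\sigma^2:=\mathrm{Var}[Y]=\frac{1}{(b-a)^2}\sum_{i=1}^M\sigma_i^2$, so the hypotheses of Theorem~\ref{th:Chern} are satisfied for the family $Y_1,\dots,Y_M$.

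Next I would rewrite the event in question. Since $x_0\le\mu$,
\begin{align*}
P[X\le x_0]=P\left[Y\le -\frac{M(\mu-x_0)}{b-a}\right]=P[Y\le -x_0']\, ,\qquad x_0':=\frac{M(\mu-x_0)}{b-a}\ge 0\, .
\end{align*}
The hypothesis $x_0\ge\mu-\frac{\sum_i\sigma_i^2}{M(b-a)}$ is precisely $\mu-x_0\le\frac{\sum_i\sigma_i^2}{M(b-a)}$; multiplying by $\frac{M}{b-a}>0$ gives $x_0'\le\frac{\sum_i\sigma_i^2}{(b-a)^2}=\sigma^2\le 2\sigma^2$, so the admissibility condition $0\le x_0'\le 2\sigma^2$ required by the second bound of Theorem~\ref{th:Chern} is met.

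Applying that bound to $Y$ then yields
\begin{align*}
P[X\le x_0]\le\exp\left(-\frac{(x_0')^2}{4\sigma^2}\right)=\exp\left(-\frac{M^2(\mu-x_0)^2}{4\sum_{i=1}^M\sigma_i^2}\right)\, ,
\end{align*}
and finally $\sum_{i=1}^M\sigma_i^2\le M\sigma_0^2$ with $\sigma_0^2=\max_i\sigma_i^2$ weakens the exponent to $-\frac{M(\mu-x_0)^2}{4\sigma_0^2}$, which is the claimed inequality. Every step here is elementary arithmetic; the only place I expect to need genuine care is the bookkeeping of the validity range under the rescaling — specifically, confirming that the relevant interval length is $b-a$ (i.e.\ that $|X_i-\mu|\le b-a$, not some smaller quantity), so that the hypothesis on $x_0$ converts correctly into $0\le x_0'\le 2\sigma^2$ and the constant $4\sigma_0^2$ in the final exponent emerges as stated rather than off by a numerical factor.
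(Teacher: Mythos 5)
Your proof is correct and is exactly the derivation the paper intends (the corollary is stated there without an explicit proof, as a direct consequence of the theorem): centering and rescaling by $b-a$ gives $|Y_i|\le 1$, the stated hypothesis $\mu-x_0\le \frac{\sum_i\sigma_i^2}{M(b-a)}$ indeed lands inside the theorem's admissible range $0\le x_0'\le 2\sigma^2$ (with a factor of $2$ to spare), and bounding $\sum_i\sigma_i^2\le M\sigma_0^2$ yields the claimed exponent. No gaps.
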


\subsubsection{Bernstein inequality}
The following expression known as Bernstein inequality, was proven by Bernstein in 1927, but we refer to the work of George Bennett \cite{Bernstein62}.
The inequality
is valid under certain restrictions for the absolute moments. Since our random variables are bounded, we can be sure these restrictions to be fulfilled. 
\begin{TH1}
Let $X_1,...,X_M$ be independent random variables with $E[X_i]=0$ and $\vert X_i \vert \leq \xi$ for all $i$. Also let $X=\frac{1}{M}\sum\limits_{i=1}^M X_i$ and $\sigma^2=\frac{1}{M}\sum\limits_{i=1}^M \mathrm{Var}[X_i]$. Then
\begin{align*}
&P[X \geq x_0] \leq \exp\left(-\frac{x_0^2 M}{2\sigma^2+\frac23 \xi x_0}\right)\, , &\forall x_0> 0 \, , \\
&P[X \leq -x_0] \leq \exp\left(-\frac{x_0^2 M}{2\sigma^2+\frac23 \xi x_0}\right)\, , &\forall x_0> 0 \, .
\end{align*}
\label{th:bern}
\end{TH1}
\begin{COR}
Let $X_1,...,X_M$ be independent random variables with $E[X_i]=\mu$, $\sigma_i^2=\mathrm{Var}[X_i]$ and $a\leq X_i \leq b$ for all $i$. Also let $X=\frac{1}{M}\sum\limits_{i=1}^M X_i$ and $\sigma^2=\frac{1}{M}\sum\limits_{i=1}^M \sigma_i^2\leq \sigma_0^2 = \max\limits_i\{\sigma_i^2\}$. Then 
\begin{align*}
P[X \leq x_0] &\leq \exp\left(-\frac{(\mu-x_0)^2 M}{2\sigma^2+\frac23 (b-a) (\mu-x_0)}\right) \\
			  &\leq \exp\left(-\frac{(\mu-x_0)^2 M}{2\sigma_0^2+\frac23 (b-a) (\mu-x_0)}\right)\, , &\forall x_0< \mu \, .
\end{align*}
\label{cor:Bern}
\end{COR}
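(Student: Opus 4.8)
The plan is to derive this corollary directly from Theorem~\ref{th:bern} by a simple recentering of the random variables. First I would introduce the centered variables $Y_i := \mu - X_i$. By linearity $E[Y_i] = 0$, and since the variance is invariant under translation and sign flip, $\mathrm{Var}[Y_i] = \sigma_i^2$, so that $\frac1M\sum_{i=1}^M\mathrm{Var}[Y_i] = \sigma^2$ exactly as required in the hypothesis of Theorem~\ref{th:bern}. Independence of the $Y_i$ is inherited from that of the $X_i$.

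Next I would pin down a uniform bound $\xi$ on $|Y_i|$. Because $\mu = E[X_i]$ with $a \le X_i \le b$, we have $\mu \in [a,b]$; hence $Y_i = \mu - X_i$ ranges in $[\mu - b,\, \mu - a] \subseteq [-(b-a),\, b-a]$, so the choice $\xi = b-a$ is valid. This is the only place where the two-sided boundedness $a \le X_i \le b$ (rather than merely a bound on $|X_i|$) is used, and it is the step I would be most careful about, although it is entirely elementary.

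Then I would translate the event of interest. Writing $Y := \frac1M\sum_{i=1}^M Y_i = \mu - X$, the event $\{X \le x_0\}$ is exactly $\{Y \ge \mu - x_0\}$, and $\mu - x_0 > 0$ precisely under the stated hypothesis $x_0 < \mu$. Applying the upper-tail bound of Theorem~\ref{th:bern} to $Y$ with threshold $\mu - x_0$ yields
\begin{align*}
P[X \le x_0] = P[Y \ge \mu - x_0] \le \exp\!\left(-\frac{(\mu-x_0)^2 M}{2\sigma^2 + \tfrac23 (b-a)(\mu - x_0)}\right),
\end{align*}
which is the first claimed inequality.

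Finally, for the second inequality I would invoke the monotonicity of $s \mapsto \exp\!\left(-c/(2s + d)\right)$, which is increasing in $s$ for $c,d > 0$: replacing $\sigma^2$ by the larger quantity $\sigma_0^2 = \max_i \sigma_i^2 \ge \frac1M\sum_i\sigma_i^2$ can only enlarge the right-hand side, giving the stated weaker but variance-uniform bound. I do not expect any genuine obstacle here — the probabilistic content lies entirely in Theorem~\ref{th:bern}, and this corollary is just its affine reformulation in the form convenient for the finite-statistics argument of the main text, where one sets $x_0 = t_0$, $b - a = t_u - t_l$, and uses the worst-case values $\mu = 0$, $\sigma_0^2 = -t_l t_u$.
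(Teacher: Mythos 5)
Your proposal is correct and follows exactly the route the paper intends: recentering via $Y_i=\mu-X_i$ (so $|Y_i|\leq b-a$ plays the role of $\xi$), applying the tail bound of Theorem~\ref{th:bern} at threshold $\mu-x_0>0$, and then using monotonicity of the bound in the variance to pass from $\sigma^2$ to $\sigma_0^2$. The paper states the corollary without spelling out this elementary reduction, so there is nothing to add or correct.
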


\subsubsection{Uspensky inequality}
Uspensky stated in \cite{Usp} an inequality for a stochastic variable:
\begin{TH1}
Let $X$ be a random variable with mean value $E[X]=0$ and $a\leq X\leq b$. Additionally $b\geq \vert a \vert$. Let $\sigma^2=\mathrm{Var}[X]$. Then for $x_0\leq 0$ 
\begin{align*}
P[X\leq x_0] \leq \frac{\sigma^2}{\sigma^2+x_0^2} \, . 
\end{align*}
\end{TH1} 
\begin{COR}
Let $X_1,...,X_M$ be independent random variables with mean values $E[X_i]=\mu$ and $a\leq X_i\leq b$ for all $i$. Additionally $b\geq \vert a \vert$. Let $\sigma_i^2= \mathrm{Var}[X_i]$ and $\sigma_0^2=\max\limits_i\left\{\sigma_i^2\right\}$. Then $\sigma^2=\mathrm{Var}[X]=\mathrm{Var}\left[\frac1M \sum\limits_{i=1}^M X_i\right] = \frac{1}{M^2}\sum\limits_{i=1}^M \sigma_i^2\leq \frac1M \sigma_0^2$ so that for $x_0\leq \mu$ 
\begin{align*}
P[X\leq x_0]\leq \frac{\sigma_0^2}{\sigma_0^2+(x_0-\mu)^2 M} \, .
\end{align*} 
\label{cor:Usp}
\end{COR}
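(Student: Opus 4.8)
The plan is to reduce the statement to the single‑variable Uspensky inequality (the Theorem stated just above this Corollary) applied to the centred empirical mean, and then to relax the variance $\sigma^2$ to the uniform bound $\sigma_0^2/M$ by an elementary monotonicity argument.

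First I would center: set $Y_i := X_i-\mu$, so $E[Y_i]=0$, and $Y:=\frac1M\sum_{i=1}^M Y_i = X-\mu$, where $X=\frac1M\sum_i X_i$ is the empirical mean of the statement. Being a convex combination of the $X_i$, the mean $X$ lies in $[a,b]$, hence $Y\in[a-\mu,\,b-\mu]$; since $a\le\mu\le b$ we have $b-\mu\ge 0\ge a-\mu$, so $Y$ is a bounded, mean‑zero variable and the preceding Theorem applies to it (equivalently, the one‑sided bound $P[Y\le y_0]\le\sigma^2/(\sigma^2+y_0^2)$ for $y_0\le 0$ is Cantelli‑type and only needs $E[Y]=0$ and $\mathrm{Var}[Y]<\infty$). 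By independence of the $X_i$,
\begin{align*}
\sigma^2:=\mathrm{Var}[Y]=\mathrm{Var}[X]=\frac1{M^2}\sum_{i=1}^M\sigma_i^2\le\frac1{M^2}\cdot M\sigma_0^2=\frac{\sigma_0^2}{M},
\end{align*}
which is the variance identity and bound asserted in the Corollary.

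Then I would apply the Uspensky Theorem to $Y$ with threshold $y_0:=x_0-\mu\le 0$, obtaining
\begin{align*}
P[X\le x_0]=P[Y\le x_0-\mu]\le\frac{\sigma^2}{\sigma^2+(x_0-\mu)^2}.
\end{align*}
To finish, put $c:=(x_0-\mu)^2$. If $c=0$ the claimed inequality is the trivial bound $P[X\le x_0]\le 1$. If $c>0$, the map $t\mapsto t/(t+c)$ is strictly increasing on $[0,\infty)$, so inserting $\sigma^2\le\sigma_0^2/M$ gives
\begin{align*}
\frac{\sigma^2}{\sigma^2+(x_0-\mu)^2}\le\frac{\sigma_0^2/M}{\sigma_0^2/M+(x_0-\mu)^2}=\frac{\sigma_0^2}{\sigma_0^2+(x_0-\mu)^2M},
\end{align*}
which is exactly the assertion of the Corollary.

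The one point that I expect to need genuine care is the transfer of the extra hypothesis ``$b\ge|a|$'' from the individual $X_i$ to the centred mean $Y$: the inequality $b-\mu\ge|a-\mu|$ is not automatic when $\mu$ is close to $b$, so one must either argue that the cited Uspensky bound does not require it in its one‑sided form, or restrict attention to the regime actually used in the main text (the worst case is $\mu=0$, where the condition reduces to $b\ge|a|$ on the estimator itself, which holds by the chosen value of $q$). Everything else is routine centring, the variance identity for independent sums, and monotonicity of $t\mapsto t/(t+c)$.
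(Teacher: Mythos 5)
Your proposal is correct and follows essentially the same route the paper takes (the Corollary's justification is folded into its statement): bound $\mathrm{Var}[X]$ by $\sigma_0^2/M$ via independence, apply the single-variable Uspensky/Cantelli bound to the centred mean $X-\mu$ with threshold $x_0-\mu\le 0$, and use monotonicity of $t\mapsto t/(t+c)$ to replace $\sigma^2$ by $\sigma_0^2/M$. Your observation that the condition $b\ge|a|$ need not transfer to the centred mean, but is unnecessary because the one-sided bound only requires zero mean and finite variance, is a sound clarification of a point the paper leaves implicit.
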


\subsubsection{Berry-Esseen inequality}
Andrew C. Berry and Carl-Gustav Esseen proved the following theorem \cite{Berry}:
\begin{TH1}
Let $X_1,...,X_M$ be independent identically distributed random variables with $E[X_i]=\mu$ for all $i$ and with $X=\frac1M \sum\limits_{i=1}^M X_i$. Additionally, the variance $\sigma^2=\mathrm{Var}[X_i]=E[(X_i-\mu)^2]$ and the third absolute moment $\rho=E\left[\vert X_i-\mu\vert^3\right]$ are finite. Then there exists a constant $C$ such that for all $x$
\begin{align}
\vert F_M(x)-\Phi(x)\vert \leq \frac{C \rho}{\sigma^3 \sqrt{M}} \, , 
\end{align}
where $F_M(x)=P\left[\sqrt{\frac{M}{\sigma^2}}(X-\mu)\leq x\right]$ and $\Phi(x)=\frac12\left(1+\mathrm{erf}(x)\right)$.

Esseen also proved that the constant $C$ has to fulfill $C\geq\frac{\sqrt{10}+3}{6\sqrt{2\pi}}$.
A good estimate for $C$ follows from
\begin{align}
\vert F_M(x)-\Phi(x)\vert \leq \frac{0.33554(\rho+0.415\sigma^3)}{\sigma^3 \sqrt{M}} \, . 
\end{align}
\label{th:berry}
\end{TH1} 
\begin{COR}
A direct consequence of Theorem \ref{th:berry} is 
\begin{align*}
P\left[X\leq x_0\right]&=P\left[\sqrt{\frac{M}{\sigma^2}}(X-\mu)\leq \sqrt{\frac{M}{\sigma^2}}(x_0-\mu)\right] \\
	&\leq \Phi\left(\sqrt{\frac{M}{\sigma^2}}(x_0-\mu)\right) + \frac{C\rho}{\sigma^3 \sqrt{M}} \, .
\end{align*}
\label{cor:Berry}
\end{COR}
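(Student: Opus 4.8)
The plan is to obtain the corollary directly from Theorem~\ref{th:berry} by recognizing the one-sided tail probability $P[X\leq x_0]$ of the sample mean as the value of its exact distribution function $F_M$ at a single point, and then applying the Berry--Esseen estimate pointwise there.

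First I would observe that $\sqrt{M/\sigma^2}$ is a strictly positive real number (since $\sigma^2>0$ and $M\geq 1$), so the affine map $y\mapsto \sqrt{M/\sigma^2}\,(y-\mu)$ is strictly increasing. Hence the events $\{X\leq x_0\}$ and $\bigl\{\sqrt{M/\sigma^2}\,(X-\mu)\leq \sqrt{M/\sigma^2}\,(x_0-\mu)\bigr\}$ coincide, which is exactly the equality on the first line of the statement. By the definition of $F_M$ given in Theorem~\ref{th:berry} this means $P[X\leq x_0]=F_M(x)$ with $x:=\sqrt{M/\sigma^2}\,(x_0-\mu)$.

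Second, I would bound $F_M(x)$ by writing $F_M(x)=\Phi(x)+\bigl(F_M(x)-\Phi(x)\bigr)\leq \Phi(x)+\lvert F_M(x)-\Phi(x)\rvert$ and then invoking the Berry--Esseen bound $\lvert F_M(x)-\Phi(x)\rvert\leq C\rho/(\sigma^3\sqrt{M})$, which holds for every $x$ by Theorem~\ref{th:berry}; the hypotheses needed there (i.i.d.\ variables with finite variance $\sigma^2$ and finite third absolute moment $\rho$) are assumed. Substituting $x=\sqrt{M/\sigma^2}\,(x_0-\mu)$ back in gives $P[X\leq x_0]\leq \Phi\bigl(\sqrt{M/\sigma^2}\,(x_0-\mu)\bigr)+C\rho/(\sigma^3\sqrt{M})$, which is the claim, and no restriction on $x_0$ is required since the Berry--Esseen estimate is uniform in $x$.

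There is essentially no obstacle here: the only points to verify are that the scaling factor is positive, so that passing to the standardized variable preserves the inequality direction, and that the assumptions of Theorem~\ref{th:berry} carry over verbatim. The mathematical content of the corollary is purely the remark that the left tail of the sample mean at $x_0$ equals the exact CDF of the standardized sum evaluated at $\sqrt{M/\sigma^2}\,(x_0-\mu)$, to which the Berry--Esseen bound applies directly.
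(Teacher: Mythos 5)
Your proposal is correct and follows exactly the paper's (implicit) argument: the corollary is just the observation that $P[X\leq x_0]=F_M\bigl(\sqrt{M/\sigma^2}(x_0-\mu)\bigr)$ by the strictly increasing standardization, combined with the pointwise Berry--Esseen estimate $F_M(x)\leq\Phi(x)+\lvert F_M(x)-\Phi(x)\rvert\leq\Phi(x)+C\rho/(\sigma^3\sqrt{M})$. Nothing further is needed.
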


\subsection{Largest p-value of the concentration inequalities}

In this section we determine the largest p-value for the concentration inequalities listed above, under the assumption that $x_0<0$ and $\mu\geq 0$. We denote with $X$ the random variable $\frac1M\sum_{i=1}^M X_i$, with $x_l\leq X_i \leq x_u$ and $x_l<0$. 
 
The crucial point here is that we have no information about the random variable apart from its non-negative mean value. This lack of information directly disqualifies the Berry-Esseen inequality of Corollary \ref{cor:Berry} as a potential tight bound. Indeed, the Berry-Esseen bound does not result in a tighter restriction than the trivial bound $P[X\leq x_0] \leq 1$. To see this, we consider the following distribution with three peaks:

Consider the case, for which all $X_i$ satisfy the following probability distribution:
\begin{align}
P[X_i=x]=
\begin{cases}
p_l \quad &\text{ for }x=x_l \\
p_u \quad &\text{ for }x=x_u \\
1-p_l-p_u \quad &\text{ for }x=0 \\
0 \quad &\text{ else} 
\end{cases} \, ,
\end{align}
where $p_l, p_u<1$. Additionally, we demand $E[X_i]=0$ leading to the condition $-p_l x_l =p_u x_u$. We thus arrive at the variance and third absolute moment:
\begin{align}
\sigma^2 &= p_u x_u(x_u-x_l)\, , \\
\rho     &= p_u x_u(x_u^2+x_l^2) \\
\Rightarrow \frac{\rho}{\sigma^3} &= \frac{x_u^2+x_l^2}{\sqrt{p_u x_u(x_u-x_l)^3}} \, . \label{eq:tri}
\end{align}
In this expression, $p_u$ remains as a parameter scaling the weight on the edges compared to the weight at $x=0$. Note that for $p_u\to \frac{-x_l}{x_u-x_l}$, we arrive at a binomial distribution while for $p_u\to 0$ we have a delta distribution. From Eq.~\eqref{eq:tri}, we see that in the limit $p_u\to 0$, $\frac{\rho}{\sigma^3} \to \infty$. Thus, we can write for the Berry-Esseen bound
\begin{align}
P[X\leq x_0] &\leq \Phi\left(\sqrt{\frac{M}{\sigma^2}}(x_0-\mu)\right) + \frac{C\rho}{\sigma^3 \sqrt{M}} \leq \frac{C\rho}{\sigma^3 \sqrt{M}} = \frac{C}{\sqrt{M}}\frac{x_u^2+x_l^2}{\sqrt{p_u x_u(x_u-x_l)^3}} \xrightarrow[p_u \to 0]{} \infty \, .
\end{align}
Since we have no information on the actual probability distribution, the largest p-value of the Berry-Esseen bound is $1$. We thus restrict our interests to the Chernoff, Bernstein and Uspensky bounds.

The inequalities of Corollaries \ref{cor:Chern}, \ref{cor:Bern} and \ref{cor:Usp} have certain properties in common: They all depend on the variance $\sigma_0^2$ and the mean value $\mu$. More explicitly, the dependence on $\sigma_0^2$ in all three cases is such that if one increases the variance, the bounds are also increased. We therefore use the following strategy to determine the largest p-values: We increase the variance of an arbitrary random variable in a way leaving the mean value unaffected. We eventually arrive at an easy-to-handle probability distribution with a maximal variance. The bounds resulting from this distribution then serve as upper bounds for all distributions of the same mean value. The bounds given by the concentration inequalities will then only depend on the mean value, so that we can optimize over $\mu$.

Eventually, we show that the largest p-value results from the binomial distribution centered around $\mu=0$. 
\begin{TH1}
Let $X$ be a random variable in the interval $[a,b]$ with an arbitrary probability distribution and with $E[X]=\mu$. Let $X_{bi}$ be a binomially distributed random variable with peaks at the edges $a$ and $b$. Furthermore, $X_{bi}$ has the same mean value as $X$, i.e. $P[X_{bi}=a]=\frac{\mu-b}{a-b}$, $P[X_{bi}=b]=\frac{a-\mu}{a-b}$ and $P[X_{bi}=x]=0$ otherwise. Additionally, let $a\leq 0$. Then
\begin{align*}
\mathrm{Var}[X_{bi}]\geq \mathrm{Var}[X] \, .
\end{align*}
\label{th:Var}
\end{TH1}
\begin{proof}
We define a third random variable $Y$ satisfying 
\begin{align}
P[Y=x]= 
\begin{cases} P[X=x] &\text{if } x \notin dx_0\cup\{a,b\}  \\ 
			  0 & \text{if } x \in dx_0 \\ 
			  P[X=a]+qP[X\in dx_0] &\text{if } x=a \\ 
			  P[X=b]+(1-q)P[X\in dx_0]  &\text{if } x=b 
\end{cases} \, .
\end{align}
Here, $dx_0$ is an infinitesimal set around $x_0\in ]a,b[$. So in other words, the probability distribution function of $Y$ is almost the same as the one of $X$. The only difference is that the set $dx_0$ is "cut out" and the probabilities at $a$ and $b$ are increased (see Fig. \ref{fig:Var}). They are increased in a way which leaves the mean value unaffected, i.e. $q$ is chosen such that $E[Y]=E[X]=\mu$: 
\begin{align}
E[Y] &= E[X]-P[X\in dx_0]x_0+qP[X\in dx_0]a+(1-q)P[X\in dx_0]b \nonumber\\
	 &= \mu + P[X\in dx_0][-x_0+q(a-b)+b] = \mu \nonumber\\
	 &\Leftrightarrow \quad q=\frac{x_0-b}{a-b} \, .
\end{align}
With this and $a\leq 0$, we show that $\mathrm{Var}[Y]\geq \mathrm{Var}[X]$: 
\begin{align}
\mathrm{Var}[Y] &= \mathrm{Var}[X]+P[X\in dx_0]\left[-x_0^2+q(a^2-b^2)+b^2\right] = \mathrm{Var}[X]+P[X\in dx_0]\left[-x_0^2+x_0(a+b)-ab\right]\nonumber\\
	   &\geq \mathrm{Var}[X]+P[X\in dx_0]\left[-x_0^2+x_0(a+x_0)-a x_0\right] \geq \mathrm{Var}[X]+P[X\in dx_0]\left[-x_0^2+x_0^2\right] = \mathrm{Var}[X]  \, .
\end{align}
So we find that $\mathrm{Var}[Y]\geq \mathrm{Var}[X]$, with the equal sign only if $P[X\in dx_0]=0$. By induction, one can gradually "cut out" all the other points in $]a,b[$. During this process, the variance is constantly increased while the mean value remains unchanged. Eventually, one arrives at the random variable $X_{bi}$ with its binomial distribution.  
\begin{figure}
\centering
\includegraphics[width=0.8\textwidth]{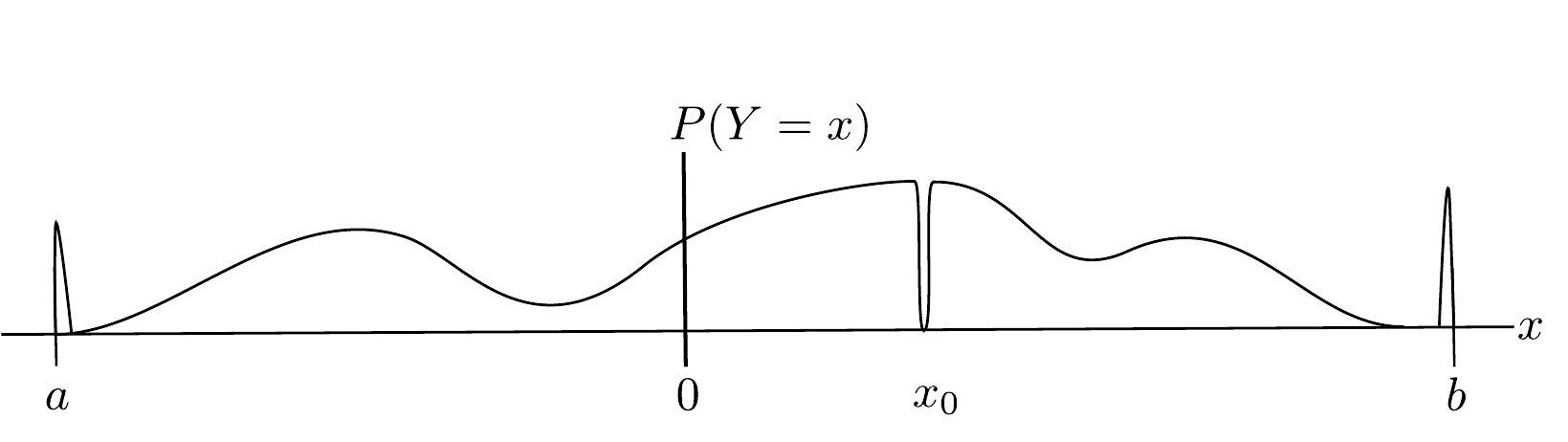}
\caption{Sketch of the probability distribution function of $Y$ in Theorem~\ref{th:Var}.}
\label{fig:Var}
\end{figure}
\end{proof}

By applying the different bounds and using Theorem \ref{th:Var}, we find 
\begin{align}
P[X\leq x_0]\leq 
\begin{cases}
\exp\left(-\frac{(\mu-x_0)^2 M}{4\sigma_{bi}^2(\mu)}\right) =:p_C(\mu,M) &\text{ Chernoff} \\
\\
\exp\left(-\frac{(\mu-x_0)^2 M}{2\sigma_{bi}^2(\mu)+\frac23(x_u-x_l)(\mu-x_0)}\right) =:p_B(\mu,M) &\text{ Bernstein}\\
\\
\frac{\sigma_{bi}^2(\mu)}{\sigma_{bi}^2(\mu)+(x_0-\mu)^2 M} =:p_U(\mu,M) &\text{ Uspensky}
\end{cases} \, ,
\end{align}
where $\sigma_{bi}^2(\mu)=(x_u-\mu)(\mu-x_l)$. As stated above, $\mu\in[0,x_u]$. Restricted to this interval, the three functions $p_C$, $p_B$ and $p_U$ are strictly monotonous decreasing with $\mu$ and therefore take their maximal values at $\mu=0$. This allows us to write
\begin{align}
P[X\leq x_0]\leq 
\begin{cases}
\exp\left(-\frac{x_0^2 M}{4\sigma_{bi}^2}\right) =:p_C(M) &\text{ Chernoff} \\
\\
\exp\left(-\frac{x_0^2 M}{2\sigma_{bi}^2-\frac23(x_u-x_l)x_0}\right) =:p_B(M) &\text{ Bernstein}\\
\\
\frac{\sigma_{bi}^2}{\sigma_{bi}^2+x_0^2 M} =:p_U(M) &\text{ Uspensky}
\end{cases} \, ,
\end{align}
with $\sigma_{bi}^2:=\sigma_{bi}^2(0)=-x_l x_u$.

We identify $M$ with the number of experimental runs, and determine the minimum number of runs required to have $P[X\leq x_0]\leq \varepsilon$. This corresponds to solving $p_i(M)\leq \varepsilon $ to $M$, where $i=C,B,U$.
We find
\begin{align}
M\geq 
\begin{cases}
M_C := \frac{4\sigma_{bi}^2}{x_0^2}\ln\left(\frac{1}{\varepsilon}\right) &\text{ Chernoff} \\
\\
M_B := \frac{2\sigma_{bi}^2-\frac23(x_u-x_l)x_0}{x_0^2}\ln\left(\frac{1}{\varepsilon}\right) &\text{ Bernstein}\\
\\
M_C := \frac{\sigma_{bi}^2(1-\varepsilon)}{\varepsilon x_0^2} &\text{ Uspensky}
\end{cases} \, .
\label{eq:bounds}
\end{align}

\subsection{Comparison of the bounds}
We now compare the three bounds stated in Ineq.~\eqref{eq:bounds} and show that in the regime of interest, the Bernstein bound is the tightest bound.
By comparing $M_B$ to $M_C$, one finds
\begin{align}
M_B\cdot\frac{x_0^2}{\ln(1/\varepsilon)}&=2\sigma_{bi}^2-\frac23(x_u-x_l)x_0<2\sigma_{bi}^2-(x_u-x_l)x_0 \nonumber\\
&<-2x_l x_u-x_u x_l-x_l x_u =-4x_l x_u = 4\sigma_{bi}^2= M_C\cdot\frac{x_0^2}{\ln(1/\varepsilon)} \nonumber\\
\Rightarrow \, M_B&<M_C \, .
\end{align}
This means the bound resulting from Bernstein's inequality is tighter than the Chernoff bound for all values of $x_l$, $x_u$ and $x_0$.

Now we compare $p_B$ to $p_U$. Plotting both functions reveals that Uspensky's inequality is better for small numbers of experimental runs, whereas Bernstein's is better for larger numbers of runs. We now want to estimate for which $\varepsilon$ both approximations require the same number of runs $M$. We therefore set $M_B=M_U$:
\begin{align}
\frac{2\sigma_{bi}^2-\frac23(x_u-x_l)x_0}{x_0^2}\ln\left(\frac{1}{\varepsilon}\right) =\frac{\sigma_{bi}^2(1-\varepsilon)}{\varepsilon x_0^2} \,\Leftrightarrow\, 2+\frac23 \frac{(x_u-x_l)(-x_0)}{\sigma_{bi}^2} = \frac{1-\varepsilon}{\varepsilon\ln\left(\frac{1}{\varepsilon}\right)}  \, . \label{eq:intersect}
\end{align}
The function on the right side of Eq. \eqref{eq:intersect} is strictly monotonous decreasing with $\varepsilon$. The left side is just a number. So the minimal $\varepsilon$ possible is achieved if the left side is maximized. We thus make the following estimation:
$$\frac{(x_u-x_l)\vert x_0\vert}{\sigma_{bi}^2}=\frac{(x_u+\vert x_l\vert)\vert x_0\vert}{\vert x_l\vert x_u} \leq \frac{x_u\vert x_l\vert + \vert x_l\vert x_u}{\vert x_l\vert x_u} = 2 \, .$$
For this value, Eq.~\eqref{eq:intersect} yields $\varepsilon\approx 0.127$. 
This means that the Uspensky bound can only be better than the Bernstein bound for $\varepsilon\geq 0.127$. Since we are interested in probabilities $\varepsilon\leq 0.1$, the Bernstein bound remains the preferred one.

\subsection{Statistical optimization}
\label{sec:exp}
Let us now present the result of numerical studies on the number of measurement runs allowing one to rule out non-Bell-correlated states. Here we rely on the Bernstein bound presented in Ineq.~\eqref{eq:bounds}.

\subsubsection{Choice of settings}
First, we study the choice of measurement settings, i.e. the choice of $\nu$ in Eq.~\eqref{eq:WInf}, which allow for the strongest statistical claim. For this, we minimize the number of experimental runs sufficient to conclude about the presence of a Bell-correlated state with given confidence $1-\varepsilon$ over the choices of $\nu$. We then compare this number of measurements $M$ to the one obtained when choosing the $\nu=\text{sinh}\left(\frac{\C_b}{1-\zeta_a^2}\right)$, i.e. for the settings which maximize the witness value (see Eq.~\eqref{eq:bestNu}). 

The result of this comparison is presented in Figure~\ref{fig:VarNu} for a particular choice of $\C_b$ and $\zeta_a^2$. Clearly, it is advantageous to reoptimize the measurement setting in order to maximize the statistical evidence, and thus a different witness should be considerd in this case. In the rest of this appendix, measurement settings are optimized in order to maximize statistical evidence.

\begin{figure}
\begin{minipage}{0.49\textwidth}
\centering
\includegraphics[width=0.95\textwidth]{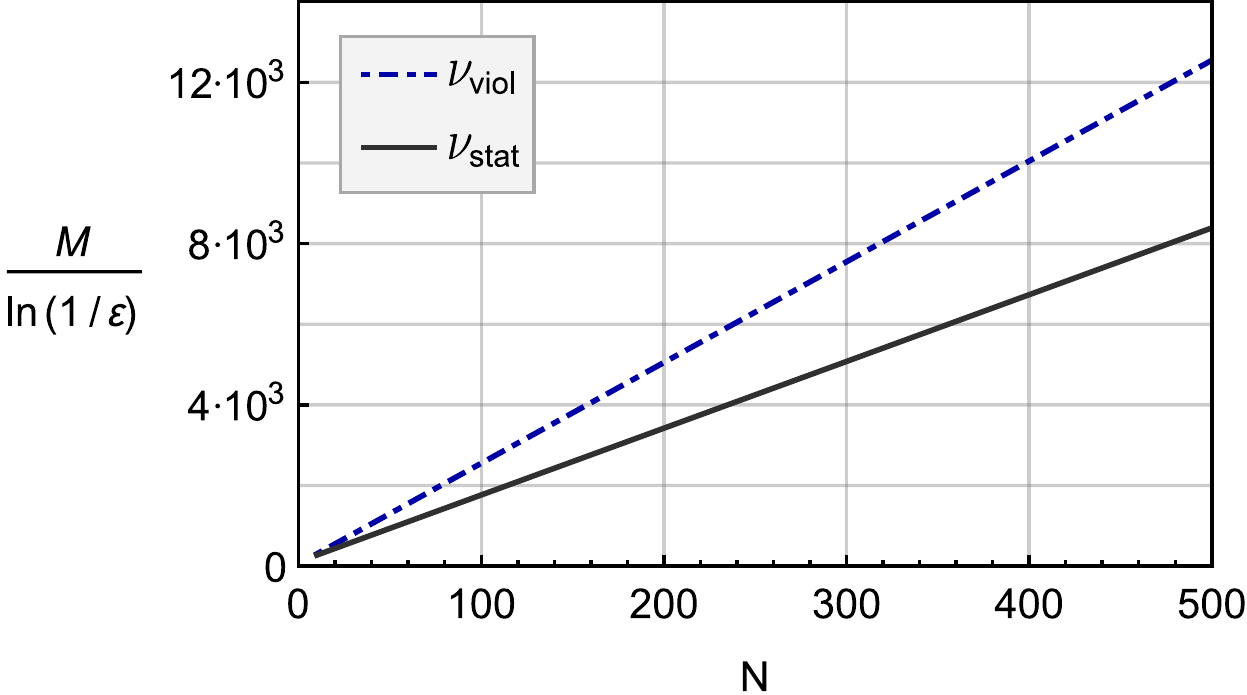}
\caption{Plot of the required number of runs as a function of the number of atoms, with $\C_b=0.98$ and $\zeta_a^2$. The blue dashdotted line results from the $\nu$ which maximizes the violation of witness \eqref{eq:finalW} whereas the solid black line corresponds to the $\nu$ minimizing the number of runs. As one can see, the optimization of $\nu$ in terms of statistics reduces the number of runs by a factor of approximately~$\frac23$.}
\label{fig:VarNu}
\end{minipage}
\hfill
\begin{minipage}{0.49\textwidth}
\centering
\includegraphics[width=0.95\linewidth]{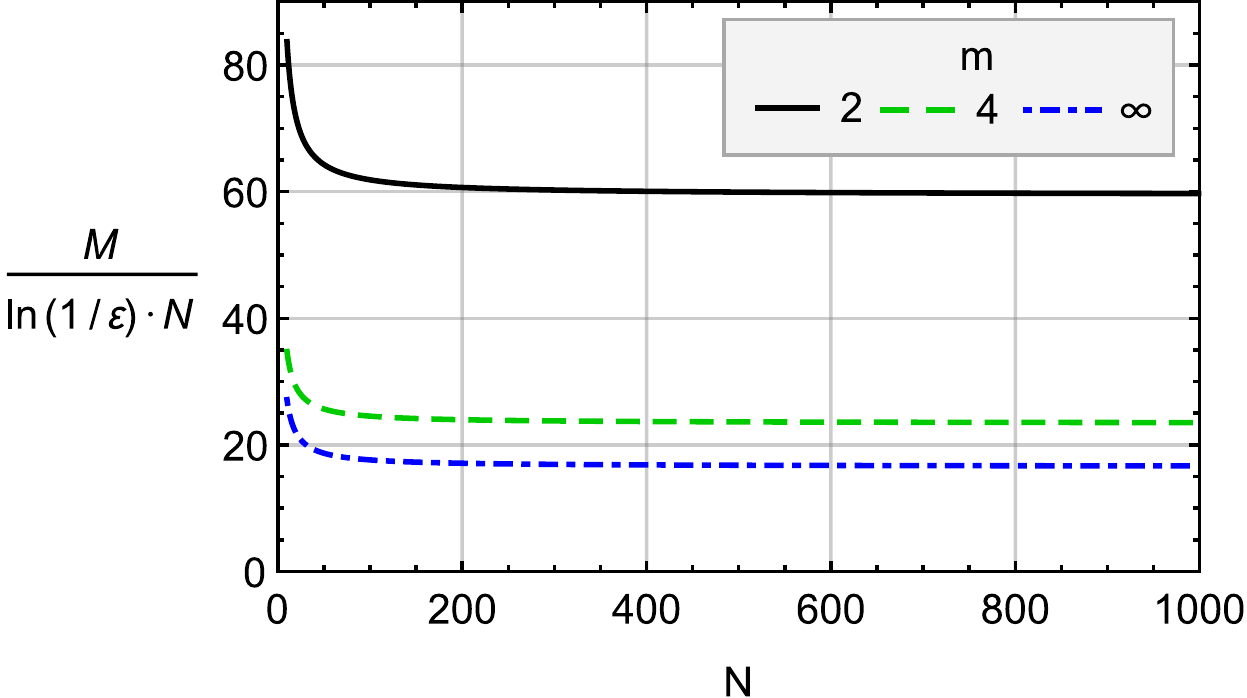}
\caption{Plots of the required number of experimental runs $M$ per spin for a confidence of $\varepsilon$ in units of $\ln(\frac1\varepsilon)$, as a function of the number of spins $N$. The plots are for $m=2,4,\infty$, with $\zeta_a^2=0.272$ and $\C_b=0.98$. The ratios tend to constants for larger systems.}
\label{fig:RatioMN}
\end{minipage}
\end{figure}

\subsubsection{Linear relation between the number of measurement runs and the number of spins}
Figure~\ref{fig:RatioMN} depicts the number of measurement runs per spin needed in order to reach a confidence of $1-\varepsilon$. The plots are for the values of $\C_b$ and $\zeta_a^2$ stated in~\cite{Schmied16}, and for $m=2,4,\infty$ settings. As explained in the main text, we observe that the ratio $\frac{M}{N}$ tends to a constant as $N$ increases. 
This plot also illustrates the gain obtained in using the newly derived Bell inequality with $m$ settings.

\subsubsection{Minimum squeezing requirement for finite $M$}
As discussed in the main text, violation of our witness requires a finite amount of squeezing for systems of finite size $N$. Here we study this requirement when finite number of measurement runs are taken into account. In Fig. \ref{fig:Ca2vsN} we show the upper bound on $\zeta_a^2$ as a function of the number of spins $N$ and the number of measurement runs $M$ for the case $\C_b=0.98$, $\varepsilon=0.1$. Although Fig. 2 in the main text shows that a violation with $N=1000$ spins is possible when $\zeta_a^2 \lesssim 0.8$, we see here that a squeezing of $\zeta_a^2 \lesssim 0.5$ is needed if the number of measurement is limited to $10^6$.

\begin{figure}
\centering
\includegraphics[width=0.43\textwidth]{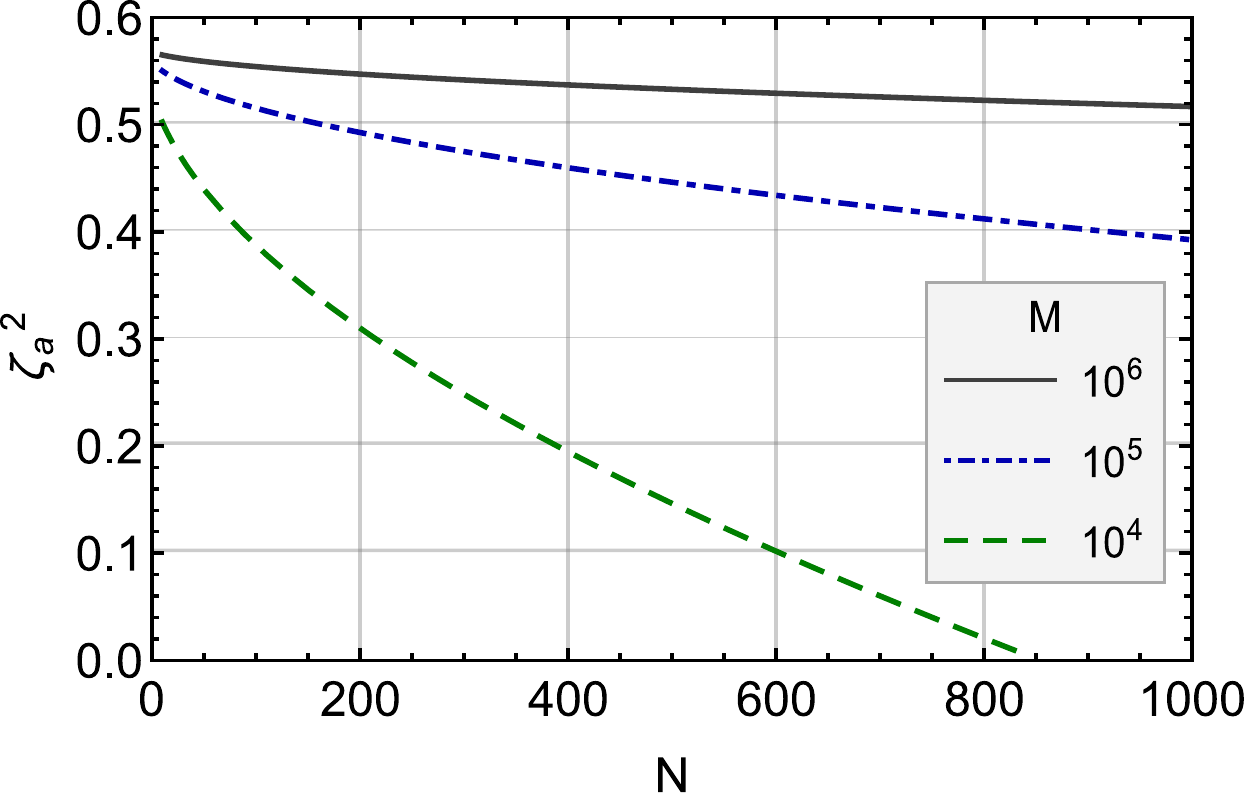}
\caption{Plots of the required scaled second moment $\zeta_a^2$ as a function of the number of atoms $N$, with $\C_b=0.98$ and $\varepsilon=0.1$. The different curves represent different numbers of experimental runs ($M=10^4,10^5,10^6$).}
\label{fig:Ca2vsN}
\end{figure}

\end{document}